\newtheorem{theorem}{Theorem}[section]
\newtheorem{prop}[theorem]{Proposition}
\newtheorem{lemma}[theorem]{Lemma}
\newtheorem{cor}[theorem]{Corollary}
\newcommand{\lancuch}{(X_n)_{n\geq0}}
\newcommand{\lancucht}{(\tilde{X}_n)_{n\geq0}}
\newcommand{\lancuchs}{(S_n)_{n\geq0}}
\newcommand{\stany}{\mathcal{X}}
\newcommand{\borel}{\mathcal{B}(\stany)}
\newcommand{\dd}{\mathrm{d}}
\newcommand{\Rl}{\mathbb{R}}
\renewcommand{\Pr}{\mathbb{P}}
\newcommand{\Y}{\mathcal{Y}}
\newcommand{\half}{{1 \over2}}
\begin{document}
\begin{frontmatter}

\title{Adaptive Gibbs samplers and related MCMC methods}
\runtitle{Adaptive Gibbs samplers}

\begin{aug}
\author[A]{\fnms{Krzysztof} \snm{{\L}atuszy\'{n}ski}\thanksref{t1,t2,t3}\ead[label=e1]{latuch@gmail.com}},
\author[A]{\fnms{Gareth O.} \snm{Roberts}\corref{}\thanksref{t3}\ead[label=e2]{Gareth.O.Roberts@warwick.ac.uk}}\\
\and
\author[B]{\fnms{Jeffrey S.} \snm{Rosenthal}\thanksref{t2}\ead[label=e3]{jeff@math.toronto.edu}}
\runauthor{K. {\L}atuszy\'nski, G. O. Roberts and J. S. Rosenthal}
\affiliation{University of Warwick, University of Warwick and
University of Toronto}
\address[A]{K. {\L}atuszy\'nski\\
G. O. Roberts\\
Department of Statistics\\
University of Warwick\\
CV4 7AL, Coventry\\
United Kingdom \\
\printead{e1}\\
\hphantom{E-mail: }\printead*{e2}}
\address[B]{J. S. Rosenthal\\
Department of Statistics\\
University of Toronto\\
Toronto, Ontario M5S 3G3\\
Canada\\
\printead{e3}} 
\end{aug}

\thankstext{t1}{While preparing this manuscript K. {\L}atuszy\'nski was
a postdoctoral
fellow at the Department of Statistics, University of Toronto.}

\thankstext{t2}{Supported in part by NSERC of Canada.}

\thankstext{t3}{Supported in part by CRiSM and other grants from EPSRC.}

\received{\smonth{1} \syear{2011}}
\revised{\smonth{6} \syear{2011}}

%
\begin{abstract}
We consider various versions of adaptive Gibbs and
Metropolis-within-Gibbs samplers, which update their
selection probabilities (and perhaps also their proposal
distributions) on the fly during a run by learning as
they go in an attempt to optimize the algorithm. We
present a cautionary example of how even a simple-seeming
adaptive Gibbs sampler may fail to converge. We then
present various positive results guaranteeing convergence
of adaptive Gibbs samplers under certain conditions.
\end{abstract}

%
\begin{keyword}[class=AMS]
\kwd[Primary ]{60J05}
\kwd{65C05}
\kwd[; secondary ]{62F15}.
\end{keyword}
\begin{keyword}
\kwd{MCMC estimation}
\kwd{adaptive MCMC}
\kwd{Gibbs sampling}.
\end{keyword}

\end{frontmatter}

\section{Introduction} \label{secintro}

Markov chain Monte Carlo (MCMC) is a commonly used approach to evaluating
expectations of the form $\theta:=\int_{\stany}f(x)\pi(\dd x)$, where
$\pi$
is an intractable probability measure, for example, known up to a normalizing
constant. One simulates $\lancuch$, an ergodic Markov chain on $\stany$,
evolving according to a transition kernel $P$ with stationary limiting
distribution $\pi$ and, typically, takes ergodic average as an estimate
of $\theta$. The approach is justified by asymptotic Markov chain theory
(see, e.g.,~\cite{MeynTw,RobRos}). Metropolis algorithms and Gibbs samplers
(to be described in Section~\ref{secsettingcounter}) are among the
most common MCMC algorithms; cf. \mbox{\cite{RoCa,Liu,RobRos}}.

The quality of an estimate produced by an MCMC algorithm depends on
probabilistic properties of the underlying Markov chain. Designing an
appropriate transition kernel $P$ that guarantees rapid convergence
to stationarity and efficient simulation is often a challenging task,
especially in high dimensions. For Metropolis algorithms there are
various optimal scaling results \cite
{RobertsGelmanGilks,RobRosMALA,Bedardaap,Bedardbeyond,AtchRobRosMCMCMC,RobRosscaling2001,RobRos,Rosenthalproposal}
which provide ``prescriptions'' of how to do
this, though they typically depend on unknown characteristics of $\pi$.

For random scan Gibbs and Metropolis-within-Gibbs samplers, a further
design decision is choosing the
selection probabilities (i.e., coordinate weightings) which will be used
to select which coordinate to update next. These are usually chosen to be
uniform, but some recent work \cite
{LiuWongKong,Levine05a,Levine05b,DiaconisStatSci,sylvia1,sylvia2} has
suggested that nonuniform
weightings may sometimes be preferable.


For a very simple toy example to illustrate this issue, suppose $\stany
= [0,1] \times[-100,100]$, with $\pi(x_1,x_2) \propto x_1^{100}
(1+\sin(x_2))$. Then with respect to $x_1$, this $\pi$ puts almost all
of the mass right up against the line $x_1=1$. Thus, repeated Gibbs
sampler updates of the coordinate $x_1$ provide virtually no help in
exploring the state space, and do not need to be done often at all
(unless the functional $f$ of interest is \textit{extremely} sensitive to
tiny changes in $x_1$). By contrast, with respect to $x_2$, this $\pi$
is a highly multi-modal density with wide support and many peaks and
valleys, requiring many updates to the coordinate $x_2$ in order to
explore the state space appropriately. (Of course, as with any Gibbs
sampler, repeatedly updating one coordinate does not help with
\textit{distributional} convergence; it only helps with sampling the
entire state
space to produce good estimates.) Thus, an efficient Gibbs sampler for
this example would not update each of $x_1$ and $x_2$ equally often;
rather, it would update $x_2$ very often and $x_1$ hardly at all.
Of course, in this simple example, it is easy to see directly that $x_1$
should be updated less than $x_2$, and furthermore, such efficiencies
would only improve the sampler by approximately a factor of 2. However,
in a high-dimensional example (cf.~\cite{sylvia2}), such issues could
be much more significant, and also much more difficult to detect manually.

One promising avenue to address this challenge
is \textit{adaptive MCMC algorithms}.
As an MCMC simulation progresses, more and more information about the
target distribution $\pi$ is learned. Adaptive MCMC attempts to use
this new information to redesign the transition kernel $P$ on the fly,
based on the current simulation output. That is, the transition kernel
$P_n$ used for obtaining $X_n|X_{n-1}$ may depend on $\{X_0, \ldots,
X_{n-1}\}$.
So, in the above
toy example, a good adaptive Gibbs sampler would somehow automatically
``learn'' to update $x_1$ less often, without requiring the user to
determine this manually (which could be difficult or impossible in a
very high-dimensional problem).

Such adaptive algorithms are only valid if their ergodicity
can be established. Unfortunately the stochastic process $\lancuch$ for
an adaptive
algorithm is no longer a Markov chain; the potential benefit of adaptive
MCMC comes at the price of requiring more sophisticated theoretical
analysis. There is substantial and rapidly growing literature on both
theory and practice of adaptive MCMC (see, e.g., \cite
{Gilksadap,Haario,aro,AndrieuMoulines,Haario2,Kadane,RobRosJAP,RobRosex,LatPhD,chao2,chao3,chao4,BaiRobRos,Bai2,Bai3,SaksmanVihola,Vihola,AtchadeFort,AtchadeEtAl,Craiu2})
which includes counterintuitive
examples where $X_n$ fails to converge to the desired distribution
$\pi$ \mbox{(cf.~\cite{aro,RobRosJAP,BaiRobRos,LatPhD})}, as well as many
results guaranteeing ergodicity under various assumptions. Most of the
previous work on ergodicity of adaptive MCMC\vadjust{\goodbreak} has concentrated on adapting
Metropolis and related algorithms, with less attention paid to ergodicity
when adapting the selection probabilities for random scan Gibbs samplers.


Motivated by such considerations, in the present paper we study the
ergodicity of various types of adaptive Gibbs samplers. To our knowledge,
proofs of ergodicity for adaptively-weighted Gibbs samplers have
previously been considered only by~\cite{LevineCasella}, and we shall
provide a counter-example below (Example~\ref{exstairwaytoheaven2})
to demonstrate that their main result is not correct. In view of this,
we are not aware of any valid ergodicity results in the literature that
consider adapting selection probabilities of random scan Gibbs samplers,
and we attempt to fill that gap herein.

This paper is organized as follows. We begin in
Section~\ref{secsettingcounter} with basic definitions.
In Section~\ref{seccounterex} we present a cautionary
Example~\ref{exstairwaytoheaven2}, where a seemingly ergodic
adaptive Gibbs sampler is in fact transient (as we prove formally later
in Section~\ref{seccounterproof}) and provides a counter-example
to Theorem 2.1 of~\cite{LevineCasella}. Next, we establish various
positive results for ergodicity of adaptive Gibbs samplers.
We consider adaptive random
scan Gibbs samplers (\texttt{AdapRSG}) which update coordinate selection
probabilities as the simulation progresses, adaptive random scan
Metropolis-within-Gibbs samplers
(\texttt{AdapRSMwG}) which update coordinate selection probabilities
as the simulation progresses and adaptive random scan adaptive
Metropolis-within-Gibbs samplers
(\texttt{AdapRSadapMwG}) that update coordinate selection
probabilities as
well as proposal distributions for the Metropolis steps. Positive
results in the uniform setting are discussed in Section \ref
{secerguniform}, whereas Section~\ref{secergnonunif} deals with the
nonuniform setting. In each case, we prove that under
reasonably mild conditions, the adaptive Gibbs samplers are guaranteed
to be ergodic, although our cautionary example does show that it is
important to verify some conditions before applying such algorithms.

\section{Preliminaries}
\label{secsettingcounter}

Gibbs samplers are commonly used MCMC algorithms for sampling from
complicated high-dimensional probability distributions $\pi$ in cases
where the full conditional distributions of $\pi$ are easy to sample from.
To define them,
let $(\stany, \mathcal{B}(\mathcal{X}))$ be a $d$-dimensional state
space where $\stany= \stany_1 \times\cdots\times\stany_d$ and write
$X_n \in\stany$ as $X_n = (X_{n,1}, \ldots, X_{n,d})$. We shall use
the shorthand notation
\[
X_{n,-i}:=
(X_{n,1}, \ldots, X_{n,i-1}, X_{n,i+1},
\ldots, X_{n,d})
\]
and similarly $\stany_{-i} = \stany_1\times\cdots
\times\stany_{i-1}\times\stany_{i+1}\times\cdots\times\stany_d$.

Let
$\pi(\cdot| x_{-i})$ denote the conditional distribution of $Z_{i} |
Z_{-i}=x_{-i}$ where \mbox{$Z \sim\pi$}.
The random scan
Gibbs sampler draws $X_n$ given $X_{n-1}$ (iteratively for
$n=1,2,3,\ldots$) by first choosing one coordinate
at random according to some selection probabilities $\alpha=
(\alpha_1,\ldots, \alpha_d)$ (e.g., uniformly), and then updating that
coordinate by a draw from its conditional distribution.
More precisely, the Gibbs sampler transition kernel $P =
P_{\alpha}$ is the result of performing the following three
steps.\vadjust{\goodbreak}

%
\begin{alg}[{[\texttt{RSG}($\alpha$)]}]\label{algrandomGibbs}
(1)
Choose coordinate $i \in\{1,\ldots, d\}$ according to
selection probabilities $\alpha$, that is, with $\Pr(i=j) =
\alpha_j$.\vspace*{-6pt}

\begin{longlist}[(3)]
\item[(2)] Draw $Y \sim\pi(\cdot| X_{n-1, -i})$.
\item[(3)] Set $X_{n}: = (X_{n-1,1},\ldots, X_{n-1,i-1}, Y,
X_{n-1,i+1},\ldots, X_{n-1,d})$.
\end{longlist}
\end{alg}

Whereas the standard approach is to choose the coordinate $i$ at the
first step uniformly at random, which corresponds to $\alpha=
(1/d,\ldots, 1/d)$, this may be a substantial waste of simulation
effort if $d$
is large and variability of coordinates differs significantly. This
has been discussed theoretically in~\cite{LiuWongKong} and also
observed empirically, for example, in Bayesian variable selection for linear
models in statistical genetics~\cite{sylvia1,sylvia2}.

Throughout the paper we denote the transition kernel of a random scan
Gibbs sampler with selection probabilities $\alpha$ as $P_{\alpha}$ and
the transition kernel of a single Gibbs update of coordinate $i$ is
denoted as $P_i$, hence, $P_{\alpha} = \sum_{i=1}^d \alpha_i P_i$.

We consider a class of adaptive random scan Gibbs samplers where selection
probabilities $\alpha= (\alpha_1,\ldots, \alpha_d)$ are subject to
optimization within some subset $\Y\subseteq[0,1]^d$
of possible choices. Therefore a single step of our generic adaptive algorithm
for drawing $X_{n}$ given the trajectory $X_{n-1},\ldots, X_0$ and
current selection probabilities $\alpha_{n-1} = (\alpha_{n-1,1},\ldots,
\alpha_{n-1,d})$ amounts to the following steps, where $R_n(\cdot)$
is some update rule for $\alpha_n$.
%
%
\begin{alg}[(\texttt{AdapRSG})]\label{algGibbsadap}
(1) Set $\alpha_{n}:=R_n(\alpha_0,\ldots, \alpha_{n-1},X_{n-1},\ldots
,\allowbreak X_0) \in\Y$.\vspace*{-6pt}

\begin{longlist}[(3)]
\item[(2)] Choose coordinate $i \in\{1,\ldots, d\}$ according to
selection probabilities~$\alpha_{n}$.
\item[(3)] Draw $Y \sim\pi(\cdot| X_{n-1,-i})$.
\item[(4)] Set $X_{n}: = (X_{n-1,1},\ldots, X_{n-1,i-1}, Y,
X_{n-1,i+1},\ldots, X_{n-1,d})$.
\end{longlist}
\end{alg}

Algorithm~\ref{algGibbsadap} defines $P_n$, the transition kernel used
at time
$n$, and $\alpha_n$ here plays the role of $\Gamma_n$ in the more
general adaptive setting of, for example,~\cite{RobRosJAP,BaiRobRos}.
Let $\pi_n = \pi_n(x_0, \alpha_0)$ denote the distribution of $X_n$
induced by Algorithm~\ref{algrandomGibbs} or~\ref{algGibbsadap}, given
starting values $x_0$ and $\alpha_0$, that is, for $B \in\mathcal
{B}(\mathcal{X})$,
%
%
\begin{equation}
\pi_n(B) = \pi_n((x_0, \alpha_0),B):= \mathbb{P}(X_n \in
B |
X_0 = x_0, \alpha_0).
\end{equation}
Clearly, if one uses Algorithm~\ref{algrandomGibbs} then $\alpha
_0=\alpha$ remains fixed and\break $\pi_n(x_0, \alpha)(B) = P_{\alpha}^n(x_0,
B)$. By $\|\nu-\mu\|_{\mathrm{TV}}$ denote the total variation distance between
probability measures $\nu$ and $\mu$. Let
%
%
\begin{equation}\label{eqndefofTn}
T(x_0, \alpha_0, n):= \|\pi_n(x_0,
\alpha_0)-\pi\|_{\mathrm{TV}}.
\end{equation}
We call the adaptive Algorithm~\ref{algGibbsadap} \textit{ergodic} if
$T(x_0, \alpha_0, n) \to0$ for $\pi$-almost every starting state
$x_0$ and all $\alpha_0 \in\Y$.



We shall also consider random scan Metropolis-within-Gibbs
samplers that instead of sampling from the full conditional at step (2)
of Algorithm~\ref{algrandomGibbs} [resp., at step (3) of
Algorithm~\ref{algGibbsadap}], perform a single Metropolis or
Metropolis--Hastings
step~\cite{Metropolis,Hastings}. More precisely, given $X_{n-1,-i}$,
the $i$th coordinate
$X_{n-1, i}$ is updated by a draw $Y$ from the proposal distribution
$Q_{X_{n-1,-i}}(X_{n-1, i}, \cdot)$ with the usual
Metropolis acceptance probability for the marginal stationary
distribution $\pi(\cdot| X_{n-1, -i})$. Such Metropolis-within-Gibbs
algorithms were originally proposed by~\cite{Metropolis} and have been
very widely used. Versions of this algorithm which adapt the proposal
distributions $Q_{X_{n-1,-i}}(X_{n-1, i}, \cdot)$
were considered by, for example,~\cite{Haario2,RobRosex}, but always with
fixed (usually uniform) coordinate selection probabilities.
If instead the proposal
distributions $Q_{X_{n-1,-i}}(X_{n-1, i}, \cdot)$ remain fixed,
but the selection
probabilities $\alpha_i$ are adapted on the fly, we obtain the following
algorithm
[where $q_{x,-i}(x,y)$ is the density function for
$Q_{x,-i}(x,\cdot)$].
%
%
\begin{alg}[(\texttt{AdapRSMwG})]
(1) Set $\alpha_{n}:=R_n(\alpha_0,\ldots, \alpha_{n-1},X_{n-1},\ldots
,\allowbreak X_0) \in\Y$.\vspace*{-6pt}

\begin{longlist}[(3)]
\item[(2)] Choose coordinate $i \in\{1,\ldots, d\}$ according to
selection probabilities~$\alpha_n$.
\item[(3)] Draw $Y \sim Q_{X_{n-1,-i}}(X_{n-1, i}, \cdot)$.
\item[(4)] With probability
%
%
\begin{equation}
\label{acceptprob}
\min\biggl(1, {\pi(Y| X_{n-1, -i})
q_{X_{n-1, -i}}(Y,X_{n-1,i})
\over
\pi(X_{n-1}| X_{n-1, -i})
q_{X_{n-1, -i}}(X_{n-1,i},Y)
} \biggr),
\end{equation}
accept the proposal and set
\[
X_{n} = (X_{n-1,1},\ldots, X_{n-1,i-1}, Y, X_{n-1,i+1},\ldots, X_{n-1,d});
\]
otherwise, reject the proposal and set $X_n = X_{n-1}$.
\end{longlist}
\end{alg}

Ergodicity of \texttt{AdapRSMwG} is considered in Sections
\ref{secadapMwG} and~\ref{secnonunifadapRSadapMwG} below. Of course,
if the proposal distribution $Q_{X_{n-1,-i}}(X_{n-1, i}, \cdot)$ is
symmetric about $X_{n-1}$, then the $q$ factors in the acceptance
probability (\ref{acceptprob}) cancel out, and (\ref{acceptprob})
reduces to the simpler probability $\min(1, \pi(Y| X_{n-1, -i}) /
\pi(X_{n-1}| X_{n-1, -i}))$.

We shall also consider versions of the algorithm in which the proposal
distributions $Q_{X_{n-1,-i}}(X_{n-1, i}, \cdot)$ are also chosen
adaptively, from some family $\{Q_{x_{-i},\gamma}\}_{\gamma
\in\Gamma_i}$ with corresponding density functions $q_{x_{-i},\gamma}$,
as in, for example, the statistical genetics
application~\cite{sylvia1,sylvia2}. Versions of such algorithms with fixed
selection probabilities are considered by, for example,~\cite{Haario2}
and~\cite{RobRosex}. They require additional adaptation parameters
$\gamma_{n,i}$ that are updated on the fly and are allowed to
depend on the past trajectories. More precisely, if $\gamma_n =
(\gamma_{n, 1},\ldots, \gamma_{n, d})$ and $\mathcal{G}_n = \sigma\{
X_0,\ldots, X_n, \alpha_0,\ldots, \alpha_n, \gamma_0,\ldots, \gamma
_n\}$,
then the conditional distribution of $\gamma_{n}$ given $\mathcal{G}_{n-1}$
can be specified by the particular algorithm used, via a second update
function $R'_n$. If we combine such
proposal distribution adaptions with coordinate selection probability
adaptions, this results in a doubly-adaptive algorithm, as follows.
%
%
\begin{alg}[(\texttt{AdapRSadapMwG})]
(1) Set $\alpha_{n}:=R_n(\alpha_0,\ldots, \alpha_{n-1},\break
X_{n-1},\ldots , X_0, \gamma_{n-1},\ldots,\gamma_0)
\in\Y$.\vspace*{-6pt}

\begin{longlist}[(3)]
\item[(2)] Set $\gamma_{n}:=R'_n(\alpha_0,\ldots, \alpha_{n-1},X_{n-1},\ldots
, X_0,
\gamma_{n-1},\ldots,\gamma_0) \in\Gamma_1 \times\cdots\times
\Gamma_n$.
\item[(3)] Choose coordinate $i \in\{1,\ldots, d\}$ according to
selection probabilities $\alpha$, that is, with $\Pr(i=j) = \alpha_j$.
%
\item[(4)] Draw $Y \sim Q_{X_{n-1,-i},\gamma_{n-1, i}}(X_{n-1, i},
\cdot)$.
\item[(5)] With probability given by (\ref{acceptprob}),
\[
\min\biggl(1, {\pi(Y| X_{n-1, -i})
q_{X_{n-1, -i}, \gamma_{n-1, i}}(Y,X_{n-1,i})
\over
\pi(X_{n-1}| X_{n-1, -i})
q_{X_{n-1, -i}, \gamma_{n-1, i}}(X_{n-1,i},Y)
} \biggr),
\]
accept the proposal and set
\[
X_{n} = (X_{n-1,1},\ldots, X_{n-1,i-1}, Y, X_{n-1,i+1},\ldots, X_{n-1,d});
\]
otherwise, reject the proposal and set $X_n = X_{n-1}$.
\end{longlist}
\end{alg}

Ergodicity of \texttt{AdapRSadapMwG} is considered in Sections
\ref{secadaptadapt} and~\ref{secnonunifadapRSadapMwG} below.


\section{A counter-example}
\label{seccounterex}

Adaptive algorithms destroy the Markovian nature of $\lancuch$, and are
thus notoriously difficult to analyze theoretically. In particular,
it is easy to be tricked into thinking that a simple adaptive algorithm
``must'' be ergodic when in fact it is not.

For example, Theorem 2.1 of~\cite{LevineCasella} states that ergodicity
of adaptive Gibbs samplers follows from the following two simple
conditions:

\begin{longlist}[(ii)]
\item[(i)] $\alpha_n \to\alpha$ a.s. for some fixed $\alpha\in
(0,1)^d$; and

\item[(ii)] the random scan Gibbs sampler with fixed selection
probabilities $\alpha$ induces an ergodic Markov chain with stationary
distribution $\pi$.
\end{longlist}


Unfortunately, this claim is false, that is, (i) and (ii) alone
do not guarantee ergodicity, as the following example and
proposition demonstrate. (It seems that in the proof of Theorem 2.1
in~\cite{LevineCasella}, the same measure is used to represent
trajectories of the adaptive process and of a corresponding nonadaptive
process, which is not correct and thus leads to the error.)


%
%
\begin{example} \label{exstairwaytoheaven2}
Let\vspace*{1pt} $\mathbb{N} = \{1,2,\ldots\}$, and let the state space
$\stany= \{(i,j) \in\mathbb{N} \times\mathbb{N}\dvtx i = j$
or $i = j+1\}$, with target distribution given by
$\pi(i,j) \propto j^{-2}$.
On $\stany$, consider a class of adaptive random scan Gibbs samplers
for $\pi$,
as defined by Algorithm~\ref{algGibbsadap}, with update rule given by
%
%
\begin{equation}\label{eqnformulaforalpha}
R_n\bigl(\alpha_{n-1},X_{n-1}=(i,j)\bigr) = \cases{
\displaystyle
\biggl\{\frac{1}{2} + \frac{4}{a_n},\frac{1}{2} -
\frac
{4}{a_n}\biggr\} ,&\quad if $i=j$,\vspace*{2pt}\cr
\displaystyle
\biggl\{\frac{1}{2} - \frac
{4}{a_n},\frac{1}{2} + \frac{4}{a_n}\biggr\} ,&\quad if $i=j+1$}
\end{equation}
for some choice of the sequence $(a_n)_{n=0}^\infty$ satisfying
$8 <a_n \nearrow\infty$.
\end{example}


Example~\ref{exstairwaytoheaven2} satisfies assumptions (i) and (ii)
above. Indeed, (i) clearly holds since $\alpha_n \to\alpha:=
(\half,\half)$, and (ii)\vspace*{1pt} follows immediately from the
standard Markov chain properties of irreducibility and aperiodicity;
cf. \cite {MeynTw,RobRos}. However, if $a_n$ increases to $\infty$
slowly enough, then the example exhibits transient behavior and is not
ergodic. More precisely, we shall prove the following proposition.


%
%
\begin{prop}\label{factXnonergodic} There exists a choice of the
$(a_n)$ for which
the process $\lancuch$ defined in Example~\ref{exstairwaytoheaven2} is
not ergodic. Specifically,
starting at $X_0=(1,1)$, we have
$\mathbb{P}(X_{n,1} \to\infty) >0$, that is, the
process exhibits transient behavior with positive probability, so
it does not converge in distribution to any probability measure on
$\stany$. In particular, $\|\pi_n - \pi\|_{\mathrm{TV}} \nrightarrow0$.
\end{prop}
%
%
\begin{remark}
In fact, we believe that in
Proposition~\ref{factXnonergodic},
$\mathbb{P}(X_{n, 1} \to\infty) = 1$,
though to reduce technicalities we only prove that
$\mathbb{P}(X_{n,1} \to\infty) >0$, which is sufficient
to establish nonergodicity.
\end{remark}

A detailed proof of Proposition~\ref{factXnonergodic} is
presented in Section~\ref{seccounterproof}. We also simulated
Example~\ref{exstairwaytoheaven2} on a computer [with the $(a_n)$ as
%
%
\begin{figure}

\includegraphics{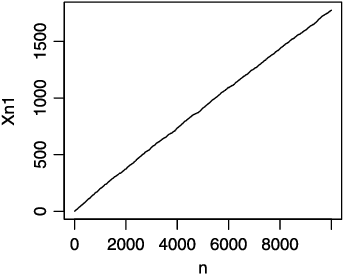}

\caption{Trace plot of $X_{n,1}$ from Example \protect\ref{exstairwaytoheaven2}.}
\label{fig1}
\end{figure}
defined in Section~\ref{seccounterproof}], resulting in the
trace plot of $X_{n,1}$ (Figure~\ref{fig1}) which illustrates the transient behavior since
$X_{n,1}$ increases quickly and steadily as a function of $n$:



\section{Ergodicity---the uniform case}
\label{secerguniform}


We now present positive results about ergodicity of adaptive
Gibbs samplers under various assumptions.
Results of this section are specific
to \textit{uniformly ergodic} chains. (Recall that a Markov chain with
transition kernel $P$ is uniformly ergodic if there exist $M<\infty$
and $\rho<1$ s.t. $ \|P^n(x, \cdot) - \pi(\cdot)\|_{\mathrm{TV}} \leq M\rho
^n$ for every $x \in\stany$; see, e.g.,
\cite{MeynTw,RobRos} for
this and other notions related to general state space Markov chains.)
In some sense this is a severe restriction, since most MCMC algorithms
arising in statistical applications are not uniformly ergodic. However,
truncating the variables involved at some (very large) value is usually
sufficient to ensure uniform ergodicity without affecting the statistical
conclusions in any practical sense, so the results of this section may
be sufficient for a pragmatic user. The nonuniform case is considered
in the following Section~\ref{secergnonunif}.

To continue, recall that \texttt{RSG}($\alpha$) stands for random
scan Gibbs
sampler with selection probabilities $\alpha$ as defined by
Algorithm~\ref{algrandomGibbs}, and \texttt{AdapRSG} is the adaptive
version as defined by Algorithm~\ref{algGibbsadap}.
For notation, let $\Delta_{d-1}:= \{(p_1,\ldots, p_d) \in
\mathbb{R}^{d}\dvtx p_i
\geq0, \sum_{i=1}^d p_i = 1\}$ be the $(d-1)$-dimensional
probability simplex, and let
%
%
\begin{equation} \label{Ydef}
\mathcal{Y}:= [\varepsilon, 1]^d \cap\Delta_{d-1}
\end{equation}
for some $0 < \varepsilon\leq1/d$.
We shall assume that all our selection probabilities are in
this set $\Y$.
%
%
\begin{remark} \label{remepsilon} The above assumption may seem
constraining, it is, however, irrelevant in practice. The additional
computational effort on top of the unknown optimal strategy $\alpha^*$
(that may be in $\Delta_{d-1} - \Y$) is easily controlled by setting
$\varepsilon:= (Kd)^{-1}$ that effectively upperbounds it by $1/K$.
The argument can be easily made rigorous, for example, in terms of the
total variation distance or the asymptotic variance.
\end{remark}

\subsection{Adaptive random scan Gibbs samplers}
\label{secadaprandscanGibbs}

The main result of this section is the following theorem.

%
%
\begin{theorem}\label{thmuniformmain}
Let the selection probabilities $\alpha_n \in\Y$ for all $n$,
with $\Y$ as in~(\ref{Ydef}).
Assume that:
\begin{longlist}[(b)]
\item[(a)] $|\alpha_n - \alpha_{n-1}| \to0$ in probability for
fixed starting values $x_0 \in\stany$ and $\alpha_0 \in\mathcal{Y}$.
\item[(b)] there exists $\beta\in\mathcal{Y}$ s.t.
\texttt{RSG}($\beta$) is uniformly ergodic.
\end{longlist}
Then \texttt{\textit{AdapRSG}} is ergodic, that is,
%
%
\begin{equation}\label{eqnthmunifmain1}
T(x_0, \alpha_0, n) \to 0
\qquad\mbox{as } n \to\infty.
\end{equation}
Moreover, if:
\begin{longlist}[(a$'$)]
\item[(a$'$)] $\sup_{x_0, \alpha_0}|\alpha_n - \alpha_{n-1}| \to0$
in probability,
\end{longlist}
then convergence of \texttt{\textit{AdapRSG}} is also uniform over all $x_0,
\alpha_0$, that is,
%
%
\begin{equation}\label{eqnthmunifmain2} \sup_{x_0, \alpha_0}T(x_0,
\alpha_0, n) \to0 \qquad\mbox{as } n \to\infty.
\end{equation}
\end{theorem}
%
%
\begin{remark}\label{remafterthmunif}
(1) Assumption (b) will typically be
verified for $\beta= (1/d,\ldots,\allowbreak 1/d)$; see also
Proposition~\ref{propcheckingunifergforG} below.

\begin{longlist}[(3)]
\item[(2)] We expect that most adaptive random scan Gibbs samplers will be
designed so that $|\alpha_n - \alpha_{n-1}| \leq a_n$ for every
$n\geq
1$, $x_0 \in\stany$, $\alpha_0\in\mathcal{Y}$, and $\omega\in
\Omega$,
for some deterministic sequence $a_n \to0$ (which holds, e.g., for the
adaptations considered in~\cite{sylvia2}).
In such cases, (a$'$) is automatically satisfied.

\item[(3)] The sequence $\alpha_n$ is not required to
converge and, in particular, the amount of adaptation, that is,
$\sum_{n=1}^{\infty}|\alpha_n-\alpha_{n-1}|$, is allowed to be infinite.

\item[(4)] In Example~\ref{exstairwaytoheaven2},
condition (a$'$) is satisfied but condition (b) is not.

\item[(5)] If we modify Example~\ref{exstairwaytoheaven2} by truncating
the state space to say $\tilde{\stany} = \stany\cap(\{1,\ldots,
M\}\times\{1,\ldots, M\})$ for some $1<M<\infty$, then the corresponding
adaptive Gibbs sampler is ergodic and (\ref{eqnthmunifmain2}) holds.
\end{longlist}
\end{remark}

Before we proceed with the proof of Theorem~\ref{thmuniformmain},
we need some preliminary lemmas, which may be of independent interest.
%
%
\begin{lemma}\label{lemmauniformlyuniformly}
Let $\beta\in\mathcal{Y}$
with $\Y$ as in (\ref{Ydef}).
If \texttt{RSG}$(\beta)$ is uniformly ergodic, then also
\texttt{RSG}$(\alpha)$ is uniformly ergodic for every $\alpha\in\mathcal
{Y}$. Moreover, there exist $M< \infty$ and $\rho< 1$ s.t. $ \sup
_{x_0 \in\stany, \alpha\in\mathcal{Y}}T(x_0, \alpha, n) \leq
M\rho^n
\to0$.
\end{lemma}
\begin{pf} Let $P_{\beta}$ be the transition kernel of
\texttt{RSG}($\beta$). It is well known that for uniformly ergodic Markov
chains the whole state space $\stany$ is small (cf. Theorems 5.2.1 and
5.2.4 in~\cite{MeynTw} with their $\psi=\pi$). Thus there exists $s>0$,
a probability measure $\mu$ on $(\stany, \borel)$ and a positive
integer $m$, s.t. for every $x \in\stany$,
%
%
\begin{equation} \label{eqnminorization}
P_{\beta}^m(x, \cdot) \geq s\mu(\cdot).
\end{equation}
Fix $\alpha\in\mathcal{Y}$ and let
\[
r:= \min_i \frac{\alpha_i}{\beta_i}.
\]
Since $\beta\in\mathcal{Y}$, we have $1 \geq r \geq\frac
{\varepsilon
}{1-(d-1)\varepsilon} > 0$ and $P_{\alpha}$ can be written as a mixture
of transition kernels of two random scan Gibbs samplers, namely,
\[
P_{\alpha} = rP_{\beta} + (1-r)P_{q} \qquad\mbox{where } q=
\frac{\alpha- r\beta}{1-r}.
\]
This, combined with (\ref{eqnminorization}), implies
%
%
\begin{eqnarray}\label{eqnminorizationmixtureforranodmGibbs}
P_{\alpha}^m(x, \cdot) &\geq& r^mP_{\beta}^m(x, \cdot) \geq
r^ms\mu(\cdot) \nonumber\\[-8pt]\\[-8pt]
&\geq&
\biggl(\frac{\varepsilon}{1-(d-1)\varepsilon}\biggr)^m s \mu(\cdot)
\qquad\mbox{for every } x \in\stany.\nonumber
\end{eqnarray}
By Theorem 8 of~\cite{RobRos}, condition (\ref
{eqnminorizationmixtureforranodmGibbs}) implies
%
%
\begin{equation}\label{eqngibbsunif}
\|P_{\alpha}^n(x, \cdot) - \pi
(\cdot)\|_{\mathrm{TV}} \leq\biggl(1- \biggl(\frac{\varepsilon
}{1-(d-1)\varepsilon
}\biggr)^m s \biggr)^{\lfloor n/m\rfloor} \qquad\mbox{for all }
x
\in\stany.\hspace*{-35pt}
\end{equation}
Since the right-hand side of (\ref{eqngibbsunif}) does not depend on
$\alpha$, the claim follows.\vadjust{\goodbreak}~%
\end{pf}
%
%
\begin{lemma}\label{lemmauniformlyLipschitz} Let $P_{\alpha}$ and
$P_{\alpha'}$ be random scan Gibbs samplers using selection
probabilities $\alpha, \alpha' \in\mathcal{Y}:=[\varepsilon,
1-(d-1)\varepsilon]^d$
for some $\varepsilon>0$. Then
%
%
\begin{equation}\label{eqnlemmadiminadapbyweights}
\|P_{\alpha}(x, \cdot) - P_{\alpha'}(x, \cdot)\|_{\mathrm{TV}} \leq\frac
{|\alpha
- \alpha'|}{\varepsilon+ |\alpha- \alpha'|} \leq\frac{|\alpha-
\alpha
'|}{\varepsilon}.
\end{equation}
\end{lemma}
\begin{pf}
Let\vspace*{1pt} $\delta:= |\alpha- \alpha'|$. Then $r:=\min_i\frac{\alpha
'_i}{\alpha_i} \geq\frac{\varepsilon}{\varepsilon+\max_i|\alpha
_i -
\alpha'_i|} \geq\frac{\varepsilon}{\varepsilon+\delta}$ and, reasoning
as in the proof of Lemma~\ref{lemmauniformlyuniformly}, we can write
$P_{\alpha'} = rP_{\alpha} + (1-r)P_q$ for some $q$ and compute
\begin{eqnarray*}
\|P_{\alpha}(x, \cdot) - P_{\alpha'}(x,
\cdot
)\|_{\mathrm{TV}} & = & \bigl\|\bigl(rP_{\alpha} + (1-r)P_{\alpha}\bigr) - \bigl(rP_{\alpha} +
(1-r)P_q\bigr)\bigr\|_{\mathrm{TV}} \\
& = & (1-r) \|P_{\alpha} - P_q\|_{\mathrm{TV}}
\leq\frac{\delta}{\varepsilon+\delta}
\end{eqnarray*}
as claimed.
\end{pf}
%
%
\begin{cor}
$P_{\alpha}(x, B)$ as a function of $\alpha$ on $\mathcal{Y}$ is
Lipschitz with Lipschitz constant $1/\varepsilon$ for every fixed set
$B \in\borel$.
\end{cor}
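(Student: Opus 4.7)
The plan is to deduce this corollary as an essentially immediate consequence of Lemma~\ref{lemma_uniformly_Lipschitz}, using only the definition of total variation distance.

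First, I would recall that for any two probability measures $\mu$ and $\nu$ on $(\stany, \borel)$, one has $|\mu(B) - \nu(B)| \leq \|\mu - \nu\|_{TV}$ for every $B \in \borel$ (indeed, the total variation norm is the supremum of such differences over $B$). Fixing any $x \in \stany$ and $B \in \borel$, I would apply this with $\mu = P_{\alpha}(x, \cdot)$ and $\nu = P_{\alpha'}(x, \cdot)$ for arbitrary $\alpha, \alpha' \in \mathcal{Y}$.

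Then I would invoke Lemma~\ref{lemma_uniformly_Lipschitz} to get
\begin{equation*}
|P_{\alpha}(x, B) - P_{\alpha'}(x, B)| \;\leq\; \|P_{\alpha}(x, \cdot) - P_{\alpha'}(x, \cdot)\|_{TV} \;\leq\; \frac{|\alpha - \alpha'|}{\varepsilon}.
\end{equation*}
Since this bound holds for all $\alpha, \alpha' \in \mathcal{Y}$, the map $\alpha \mapsto P_{\alpha}(x, B)$ is Lipschitz on $\mathcal{Y}$ with constant $1/\varepsilon$, uniformly in $x$ and $B$.

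There is no real obstacle here; the only point worth noting is that the Lipschitz constant is the coarser of the two bounds in Lemma~\ref{lemma_uniformly_Lipschitz} (i.e.\ $|\alpha-\alpha'|/\varepsilon$ rather than the sharper $|\alpha-\alpha'|/(\varepsilon+|\alpha-\alpha'|)$), since Lipschitz continuity is a statement about a uniform linear bound. The tighter estimate is not needed for the corollary but is available if needed elsewhere.
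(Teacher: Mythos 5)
Your argument is correct and is exactly the route the paper intends: the corollary is stated without proof as an immediate consequence of Lemma~\ref{lemma_uniformly_Lipschitz}, obtained by bounding $|P_{\alpha}(x,B)-P_{\alpha'}(x,B)|$ by the total variation distance and using the coarser bound $|\alpha-\alpha'|/\varepsilon$. Nothing further is needed.
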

%
%
\begin{cor} \label{cordiminishadap}
If $|\alpha_n-\alpha_{n-1}| \to0$ in probability, then also
\[
\sup_{x\in\stany}\|P_{\alpha_n}(x, \cdot) - P_{\alpha_{n-1}}(x, \cdot)\|_{\mathrm{TV}}
\to0
\]
in probability.
\end{cor}
\begin{pf*}{Proof of Theorem~\ref{thmuniformmain}}
We conclude the result from Theorem 1 of~\cite{RobRosJAP} that requires
simultaneous uniform ergodicity and diminishing adaptation.
Simultaneous uniform ergodicity results from combining assumption (b)
and Lemma~\ref{lemmauniformlyuniformly}. Diminishing adaptation results
from assumption (a) with Corollary~\ref{cordiminishadap}. Moreover,
note that Lemma~\ref{lemmauniformlyuniformly} is uniform in $x_0$ and
$\alpha_0$ and (a$'$) yields uniformly diminishing adaptation again by
Corollary~\ref{cordiminishadap}. A look into the proof of Theorem 1 of
\cite{RobRosJAP} reveals that this suffices for the uniform part of
Theorem~\ref{thmuniformmain}.
\end{pf*}

Finally, we note that verifying uniform ergodicity of a random scan Gibbs
sampler, as required by assumption (b) of Theorem~\ref{thmuniformmain},
may not be straightforward. Such issues have been investigated in, for
example,
\cite{RobertsPolson}, and more recently in relation to the parametrization
of hierarchical models (see~\cite{PapaGarethparametr} and references
therein). In the following proposition, we show that to verify uniform
ergodicity of any random scan Gibbs sampler, it suffices to verify
uniform ergodicity of the corresponding systematic scan Gibbs sampler
(which updates the coordinates $1, 2, \ldots, d$ in sequence rather than
select coordinates randomly). See also Theorem 2 of~\cite{galinneath}
for a related result.

%
%
\begin{prop}\label{propcheckingunifergforG}
Let $\alpha\in\mathcal{Y}$ with $\Y$ as in (\ref{Ydef}).
If the systematic scan Gibbs sampler is uniformly ergodic, then so
is \texttt{RSG}$(\alpha)$.\vadjust{\goodbreak}
\end{prop}

\begin{pf} Let
\[
P = P_1P_2\cdots P_d
\]
be the transition kernel of the uniformly ergodic systematic scan Gibbs
sampler, where $P_i$ stands for the step that updates coordinate $i$.
By the minorization condition characterization, there exist $s>0$, a
probability measure $\mu$ on $(\stany, \borel)$ and a positive integer
$m$, s.t. for every $x \in\stany$,
\[
P^m(x, \cdot) \geq s\mu(\cdot).
\]
However, the probability that the random scan Gibbs sampler $P_{1/d}$
in its $md$ subsequent steps will update the coordinates in exactly the
same order is $(1/d)^{md}>0$. Therefore, the following minorization
condition holds for the random scan Gibbs sampler.
\[
P_{1/d}^{md}(x, \cdot) \geq(1/d)^{md}s\mu(\cdot).
\]
We conclude that \texttt{RSG}($1/d$) is uniformly ergodic
and then, by Lemma~\ref{lemmauniformlyuniformly}, it
follows that \texttt{RSG}($\alpha$) is uniformly ergodic for
any $\alpha\in\Y$.
\end{pf}

\subsection{Adaptive random scan Metropolis-within-Gibbs}
\label{secadapMwG}


In this section we consider random scan Metropolis-within-Gibbs
sampler algorithms (see also Section~\ref{secnonunifadapRSadapMwG} for
the nonuniform case). Thus, given $X_{n-1,-i}$, the $i$th coordinate
$X_{n-1, i}$ is updated by a draw $Y$ from the proposal
distribution\break
$Q_{X_{n-1,-i}}(X_{n-1, i}, \cdot)$ with the usual Metropolis acceptance
probability for the marginal stationary distribution $\pi(\cdot|
X_{n-1, -i})$. Here, we consider algorithm \texttt{AdapRSMwG}, where
the proposal distributions $Q_{X_{n-1,-i}}(X_{n-1, i}, \cdot)$ remain
fixed, but the selection probabilities $\alpha_i$ are adapted on the fly.
We shall prove ergodicity of such algorithms under some circumstances.
(The more general algorithm \texttt{AdapRSadapMwG} is then considered
in the following section.)

To continue, let $P_{x_{-i}}$ denote the resulting Metropolis transition
kernel for obtaining $X_{n, i}|X_{n-1, i}$ given $X_{n-1, -i} = x_{-i}$.
We shall require the following assumption.

%
%
\begin{ass}\label{assuMetropstepsuniformly}
For every $i \in\{1,\ldots, d\}$ the transition kernel $P_{x_{-i}}$ is
uniformly ergodic for every $x_{-i} \in\stany_{-i}$. Moreover, there
exist $s_i >0$ and an integer $m_i$ s.t. for every $x_{-i} \in\stany
_{-i}$ there exists a probability measure $\nu_{x_{-i}}$ on $(\stany_i,
\mathcal{B}(\stany_i))$, s.t.
\[
P_{x_{-i}}^{m_i}(x_i, \cdot) \geq s_i \nu_{x_{-i}}(\cdot)
\qquad\mbox{for every } x_i \in\stany_i.
\]
\end{ass}

We have the following counterpart of Theorem~\ref{thmuniformmain}.
%
%
\begin{theorem}\label{thmunifMwG}
Let $\alpha_n \in\mathcal{Y}$ for all $n$,
with $\Y$ as in (\ref{Ydef}).
Assume that:
\begin{longlist}[(c)]
\item[(a)] $|\alpha_n - \alpha_{n-1}| \to0$ in probability for
fixed starting values $x_0 \in\stany$ and $\alpha_0 \in\mathcal{Y}$.\vadjust{\goodbreak}
\item[(b)] there exists $\beta\in\mathcal{Y}$ s.t.
\texttt{RSG}($\beta$) is uniformly ergodic.
\item[(c)] Assumption~\ref{assuMetropstepsuniformly} holds.
\end{longlist}
Then \texttt{\textit{AdapRSMwG}} is ergodic, that is,
%
%
\begin{equation}\label{eqnthmunifMwG1}
T(x_0, \alpha_0, n) \to0
\qquad\mbox{as } n \to\infty.
\end{equation}
Moreover, if:
\begin{longlist}[(a$'$)]
\item[(a$'$)] $\sup_{x_0, \alpha_0}|\alpha_n - \alpha_{n-1}| \to0$
in probability,
\end{longlist}
then convergence of \texttt{\textit{AdapRSMwG}} is also uniform over all $x_0,
\alpha_0$, that is,
%
%
\begin{equation}
\label{eqnthmunifMwG2} \sup_{x_0, \alpha_0}T(x_0, \alpha_0, n) \to
0 \qquad\mbox{as } n \to\infty.
\end{equation}
\end{theorem}


%
%
\begin{remark}\label{remafterthmunifMwG}
Remarks~\ref{remafterthmunif}(1)--(3) still
apply. Also, Assumption~\ref{assuMetropstepsuniformly} can easily be
verified in some cases of interest, for example:
\begin{longlist}[(3)]
\item[(1)] Independence samplers are essentially uniformly ergodic if and
only if the candidate density is bounded below by a multiple of the
stationary density, that is, $q(\dd x) \geq s \pi(\dd x)$ for some $s>0$;
cf.~\cite{MengersenTweedie}.

\item[(2)] The Metropolis--Hastings algorithm
with continuous and positive proposal density $q(\cdot, \cdot)$ and
bounded target density $\pi$ is uniformly ergodic if the state space is
compact; cf.~\cite{MeynTw,RobRos}.
\end{longlist}
\end{remark}

To prove Theorem~\ref{thmunifMwG} we build on the approach of \cite
{RobRoshybridAAP}. In particular, recall the following notions of
reversibility and of strong uniform ergodicity.

%
\begin{defi}
We say that a transition kernel $P$ on $\stany$ is reversible with
respect to its stationary distribution $\pi$, if for any $A, B \in
\mathcal{B}(\mathcal{X})$
\[
\int_A P(x, B) \pi(\dd x) = \int_B P(y, A) \pi(\dd y).
\]
\end{defi}
%
%
\begin{defi}
We say that a transition kernel $P$ on $\stany$ with stationary
distribution $\pi$ is $(m, s)$-\textit{strongly uniformly ergodic}, if
for some $s> 0$ and positive integer $m$
\[
P^m(x, \cdot) \geq s \pi(\cdot) \qquad\mbox{for every } x
\in
\stany.
\]
Moreover, we will say that a family of Markov chains $\{P_{\gamma
}\}_{\gamma\in\Gamma}$ on $\stany$ with stationary distribution
$\pi$ is $(m,s)$-\textit{simultaneously strongly uniformly ergodic}, if
for some $s> 0$ and positive integer $m$
\[
P_{\gamma}^m(x, \cdot) \geq s \pi(\cdot) \qquad\mbox{for every }
x \in\stany\mbox{ and } \gamma\in\Gamma.
\]
\end{defi}

By Proposition 1 in~\cite{RobRoshybridAAP}, if a Markov chain is both
uniformly ergodic and reversible, then it is strongly uniformly
ergodic. The following lemma improves over this result by controlling
both involved parameters.
%
%
\begin{lemma}\label{lemsimultaneousstronguniferg}
Let $\mu$ be a probability measure on $\stany$, let $m$ be a positive
integer and let $s>0$. If a reversible transition kernel $P$ satisfies
the condition
\[
P^m(x, \cdot) \geq s\mu(\cdot) \qquad\mbox{for every } x
\in
\stany,
\]
then it is $((\lfloor\frac{\log(s/4)}{\log
(1-s)}
\rfloor+2)m, \frac{s^2}{8})$-strongly uniformly ergodic.
\end{lemma}
\begin{pf}
By Theorem 8 of~\cite{RobRos}, for every $A \in\borel$ we have
\[
\|P^n(x, A) - \pi(A)\|_{\mathrm{TV}} \leq(1-s)^{\lfloor n/m \rfloor}
\]
and, in particular,
%
%
\begin{equation}\label{eqnlemproofTVdiffsmall}
\|P^{km}(x, A) - \pi(A)\|_{\mathrm{TV}} \leq s/4 \qquad\mbox{for } k
\geq\frac{\log(s/4)}{\log(1-s)}.
\end{equation}
Since $\pi$ is stationary for $P$, we have $\pi(\cdot) \geq s\mu
(\cdot
)$ and thus an upper bound for the Radon--Nikodym derivative
%
%
\begin{equation} \label{eqnRNboundmupi} \dd\mu/\dd\pi\leq1/s.
\end{equation}
Moreover, by reversibility,
\[
\pi(\dd x) P^m(x, \dd y) = \pi(\dd y) P^m(y, \dd x) \geq\pi(\dd y)s
\mu(\dd
x)
\]
and consequently
%
%
\begin{equation} \label{eqnlowerforPm}
P^m(x, \dd y) \geq s\bigl(\mu
(\dd x)/\pi(\dd x)\bigr) \pi(\dd y).
\end{equation}
Now define
\[
A:= \{x \in\stany\dvtx\mu(\dd x)/\pi(\dd x) \geq1/2\}.
\]
Clearly $\mu(A^c) \leq1/2$. Therefore by (\ref{eqnRNboundmupi}) we
have
\[
1/2 \leq\mu(A) \leq(1/s)\pi(A)
\]
and hence, $\pi(A) \geq s/2$. Moreover (\ref{eqnlemproofTVdiffsmall})
yields
\[
P^{km}(x, A) \geq s/4 \qquad\mbox{for } k := \biggl\lfloor
\frac
{\log(s/4)}{\log(1-s)}\biggr\rfloor+1
\]
and with $k$ defined above by (\ref{eqnlowerforPm}), we have
\begin{eqnarray*}
P^{km+m}(x, \cdot) &=& \int_{\stany}
P^{km}(x, \dd z)P^m(z, \cdot) \geq\int_{A} P^{km}(x, \dd z)P^m(z,
\cdot)
\\ & \geq& \int_{A} P^{km}(x, \dd z)(s/2) \pi(\cdot) \geq(s^2/8)
\pi
(\cdot).
\end{eqnarray*}
The proof is complete.
\end{pf}

We will need the following generalization of Lemma
\ref{lemmauniformlyuniformly}.
%
%
\begin{lemma}\label{lemRSMwGuniform}
Let $\beta\in\mathcal{Y}$ with $\Y$ as in (\ref{Ydef}). If
\texttt{RSG}$(\beta)$ is uniformly ergodic then there exist $s'>0$ and a
positive integer $m'$ s.t. the family
\{\texttt{RSG}($\alpha$)\}$_{\alpha\in\mathcal{Y}}$ is $(m',
s')$-simultaneously strongly uniformly ergodic.\vadjust{\goodbreak}
\end{lemma}
\begin{pf} $P_{\beta}(x, \cdot)$ is uniformly ergodic and reversible,
therefore, by Proposition~1 in~\cite{RobRoshybridAAP}, it is
$(m,s_1)$-strongly uniformly ergodic for some $m$ and $s_1$. Therefore,
and arguing as in the proof of Lemma~\ref{lemmauniformlyuniformly}
[cf. (\ref{eqnminorizationmixtureforranodmGibbs})] there exist $s_2
\geq(\frac{\varepsilon}{1-(d-1)\varepsilon})^m, $ s.t. for
every $\alpha\in\mathcal{Y}$ and every $x \in\stany$
%
%
\begin{equation}
P_{\alpha}^m(x, \cdot) \geq s_2 P_{\beta}^m(x, \cdot) \geq s_1 s_2
\pi
(\cdot).
\end{equation}
Set $m'=m$ and $s' = s_1s_2$.
\end{pf}
\begin{pf*}{Proof of Theorem~\ref{thmunifMwG}}
We proceed as in the proof of Theorem~\ref{thmuniformmain}, that is,
establish diminishing adaptation and simultaneous uniform ergodicity
and conclude (\ref{eqnthmunifMwG1}) and (\ref{eqnthmunifMwG2})
from Theorem 1 of~\cite{RobRosJAP}. Observe that
Lemma~\ref{lemmauniformlyLipschitz} applies for random scan
Metropolis-within-Gibbs algorithms exactly the same way as for random
scan Gibbs samplers thus diminishing adaptation results from assumption
(a) and Corollary~\ref{cordiminishadap}. To establish simultaneous
uniform ergodicity, observe that, by
Assumption~\ref{assuMetropstepsuniformly} and
Lemma~\ref{lemsimultaneousstronguniferg}, the Metropolis transition
kernel for $i$th coordinate, that is, $P_{x_{-i}}$, has stationary
distribution $\pi(\cdot|x_{-i})$ and is $((\lfloor
\frac{\log(s_i/4)}{\log(1-s_i)}\rfloor+2)m_i,
\frac{s_i^2}{8})$-strongly uniformly ergodic. Moreover,\vspace*{1pt} by
Lemma~\ref{lemRSMwGuniform}, the family \texttt{RSG}($\alpha$),
$\alpha\in\mathcal{Y}$ is $(m', s')$-strongly uniformly ergodic,
therefore, by Theorem 2 of~\cite{RobRoshybridAAP}, the family of
random scan Metropolis-within-Gibbs samplers with selection
probabilities $\alpha\in\mathcal{Y}$, \texttt{RSMwG}($\alpha$), is
$(m_*, s_*)$-simultaneously strongly uniformly ergodic with $m_*$ and
$s_*$ given as in~\cite{RobRoshybridAAP}.
\end{pf*}

We close this section with the following alternative version of
Theorem~\ref{thmunifMwG}.

%
%
\begin{theorem}\label{thmunifMwGalternative}
Let $\alpha_n \in\mathcal{Y}$ for all $n$,
with $\Y$ as in (\ref{Ydef}).
Assume that:
\begin{longlist}[(a)]
\item[(a)] $|\alpha_n - \alpha_{n-1}| \to0$ in probability for
fixed starting values $x_0 \in\stany$ and $\alpha_0 \in\mathcal{Y}$.
\item[(b)] there exists $\beta\in\mathcal{Y}$ s.t.
\texttt{RSMwG}$(\beta)$ is uniformly ergodic.
\end{longlist}
Then \texttt{\textit{AdapRSMwG}} is ergodic, that is,
%
%
\begin{equation}\label{eqnthmunifMwGalternative1} T(x_0, \alpha_0, n)
\to0 \qquad\mbox{as } n \to\infty.
\end{equation}
Moreover, if:
\begin{longlist}[(a$'$)]
\item[(a$'$)] $\sup_{x_0, \alpha_0}|\alpha_n - \alpha_{n-1}| \to0$
in probability,
\end{longlist}
then convergence of \texttt{\textit{AdapRSMwG}} is also uniform over all $x_0,
\alpha_0$, that is,
%
%
\begin{equation}\label{eqnthmunifMwGalternative2} \sup_{x_0, \alpha
_0}T(x_0, \alpha_0, n) \to0 \qquad\mbox{as } n \to\infty.
\end{equation}
\end{theorem}
\begin{pf} Diminishing adaptation results from assumption (a) and
Corollary~\ref{cordiminishadap}. Simultaneous uniform ergodicity can be
established as in the proof of Lem\-ma~\ref{lemmauniformlyuniformly}. The
claim follows from Theorem 1 of
\cite{RobRosJAP}.\vadjust{\goodbreak}
\end{pf}
%
%
\begin{remark}
Whereas the statement of Theorem~\ref{thmunifMwGalternative} may be
useful in specific examples, typically condition (b), the uniform
ergodicity of a random scan Metropolis-within-Gibbs sampler, will be
not available and establishing it will involve conditions required by
Theorem~\ref{thmunifMwG}.
\end{remark}

\subsection{Adaptive random scan adaptive Metropolis-within-Gibbs}
\label{secadaptadapt}

In this section, and also later in Section \ref
{secnonunifadapRSadapMwG}, we consider the adaptive random scan adaptive
Metropolis-within-Gibbs algorithm \texttt{AdapRSadapMwG}, that
updates both selection probabilities of the Gibbs kernel and proposal
distributions of the Metropolis step. Thus, given $X_{n-1,-i}$, the
$i$th coordinate $X_{n-1, i}$ is updated by a draw $Y$ from a proposal
distribution $Q_{X_{n-1,-i}, \gamma_{n,i}}(X_{n-1, i}, \cdot)$ with
the usual acceptance probability.
This doubly-adaptive algorithm has been used by, for example,~\cite{sylvia2},
for an application in statistical genetics. As with adaptive
Metropolis algorithms, the adaption of the proposal distributions in
this setting is motivated by optimal scaling results for random walk
Metropolis algorithms \cite
{RobertsGelmanGilks,RobRosMALA,Bedardaap,Bedardbeyond,AtchRobRosMCMCMC,RobRosscaling2001,RobRos,RobRosex,Rosenthalproposal}.

Let $P_{x_{-i}, \gamma_{n,i}}$ denote the resulting
Metropolis transition kernel for obtaining $X_{n, i}|X_{n-1,
i}$ given $X_{n-1, -i} = x_{-i}$. We will prove ergodicity
of this generalized algorithm using tools from the previous
section. Assumption~\ref{assuMetropstepsuniformly} must be reformulated
accordingly, as follows.
%
%
\begin{ass}\label{assuadapMetropstepsuniformly}
For every $i \in\{1,\ldots, d\}$, $x_{-i} \in\stany_{-i}$ and
$\gamma
_i \in\Gamma_i$, the transition kernel $P_{x_{-i}, \gamma_i}$ is
uniformly ergodic. Moreover, there exist $s_i >0$ and an integer $m_i$
s.t. for every $x_{-i} \in\stany_{-i}$ and $\gamma_i \in\Gamma_i$
there exists a probability measure $\nu_{x_{-i}, \gamma_i}$ on
$(\stany_i, \mathcal{B}(\stany_i))$, s.t.
\[
P_{x_{-i}, \gamma_i}^{m_i}(x_i, \cdot) \geq s_i \nu_{x_{-i},
\gamma
_i}(\cdot) \qquad\mbox{for every } x_i \in\stany_i.
\]
\end{ass}

We have the following counterpart of Theorems~\ref{thmuniformmain} and
\ref{thmunifMwG}.

%
%
\begin{theorem}\label{thmunifaMwG}
Let $\alpha_n \in\mathcal{Y}$ for all $n$,
with $\Y$ as in (\ref{Ydef}).
Assume that:
\begin{longlist}[(a)]
\item[(a)] $|\alpha_n - \alpha_{n-1}| \to0$ in probability for
fixed starting values $x_0 \in\stany$, $\alpha_0 \in\mathcal{Y}$ and
$\gamma_0 \in\Gamma$.
\item[(b)] there exists $\beta\in\mathcal{Y}$ s.t.
\texttt{RSG}$(\beta)$ is uniformly ergodic.
\item[(c)] Assumption~\ref{assuadapMetropstepsuniformly} holds.
\item[(d)] The Metropolis-within-Gibbs kernels exhibit diminishing
adaptation, that is, for every $i \in\{1,\ldots, d\}$ the $\mathcal
{G}_{n+1}$ measurable random variable
\[
\sup_{x\in\mathcal{X}}\|P_{x_{-i}, \gamma_{n+1,i}}(x_i, \cdot) -
P_{x_{-i}, \gamma_{n,i}}(x_i, \cdot)\|_{\mathrm{TV}} \to0
\qquad\mbox{in
probability, as } n \to\infty
\]
for fixed starting values $x_0 \in\stany$, $\alpha_0 \in\mathcal{Y}$
and $\gamma_0$.
\end{longlist}
Then \texttt{\textit{AdapRSadapMwG}} is ergodic, that is,
%
%
\begin{equation}\label{eqnthmunifaMwG1} T(x_0, \alpha_0, n) \to
0 \qquad\mbox{as } n \to\infty.\vadjust{\goodbreak}
\end{equation}
Moreover, if:
\begin{longlist}[(a$'$)]
\item[(a$'$)] $\sup_{x_0, \alpha_0}|\alpha_n - \alpha_{n-1}| \to0$
in probability,
\item[(d$'$)] $\sup_{x_0, \alpha_0} \sup_{x\in\mathcal{X}}\|
P_{x_{-i}, \gamma_{n+1,i}}(x_i, \cdot) - P_{x_{-i}, \gamma
_{n,i}}(x_i, \cdot)\|_{\mathrm{TV}} \to0$ in probability,
\end{longlist}
then convergence of \texttt{\textit{AdapRSadapMwG}} is also uniform over all
$x_0, \alpha_0$, that is,
%
%
\begin{equation}\label{eqnthmunifaMwG2} \sup_{x_0, \alpha_0}T(x_0,
\alpha_0, n) \to0 \qquad\mbox{as } n \to\infty.
\end{equation}
\end{theorem}
%
%
\begin{remark}
Remarks~\ref{remafterthmunif}(1)--(3) still
apply and Remark~\ref{remafterthmunifMwG} applies for verifying
Assumption~\ref{assuadapMetropstepsuniformly}. Verifying condition
(d) is discussed after the proof.
\end{remark}
\begin{pf*}{Proof of Theorem~\ref{thmunifaMwG}} We again proceed by establishing diminishing
adaptation and
simultaneous uniform ergodicity and concluding the result from Theorem
1 of~\cite{RobRosJAP}. To establish simultaneous uniform ergodicity
we proceed as in the proof of Theorem~\ref{thmunifMwG}. Observe
that by Assumption~\ref{assuadapMetropstepsuniformly} and
Lemma~\ref{lemsimultaneousstronguniferg} every adaptive
Metropolis transition kernel for $i$th coordinate, that is,
$P_{x_{-i},
\gamma_i}$, has stationary distribution $\pi(\cdot|x_{-i})$ and is
$((\lfloor\frac{\log(s_i/4)}{\log(1-s_i)}
\rfloor
+2)m_i, \frac{s_i^2}{8})$-strongly\vspace*{1pt} uniformly
ergodic. Moreover, by Lemma~\ref{lemRSMwGuniform} the family
\texttt{RSG}($\alpha$), $\alpha\in\mathcal{Y}$, is $(m', s')$-strongly
uniformly ergodic, therefore, by Theorem 2 of~\cite{RobRoshybridAAP},
the family of random scan Metropolis-within-Gibbs samplers with selection
probabilities $\alpha\in\mathcal{Y}$ and proposals indexed by
$\gamma
\in\Gamma$, is $(m_*, s_*)$-simultaneously strongly uniformly ergodic
with $m_*$ and $s_*$ given as in~\cite{RobRoshybridAAP}.

For diminishing adaptation we write
\begin{eqnarray*}
&& \sup_{x\in\mathcal{X}}\|P_{\alpha_n, \gamma_n}(x, \cdot) -
P_{\alpha_{n-1}, \gamma_{n-1}}(x, \cdot) \|_{\mathrm{TV}}
\\
&&\qquad\leq \sup_{x\in\mathcal{X}}\|
P_{\alpha_n, \gamma_n}(x, \cdot) - P_{\alpha_{n-1}, \gamma_{n}}(x,
\cdot) \|_{\mathrm{TV}} \\
&&\qquad\quad{}
+ \sup_{x\in\mathcal{X}}\|P_{\alpha_{n-1}, \gamma_n}(x, \cdot
) -
P_{\alpha_{n-1}, \gamma_{n-1}}(x, \cdot) \|_{\mathrm{TV}}.
\end{eqnarray*}
The first term above converges to $0$ in probability by Corollary \ref
{cordiminishadap} and assumption~(a). The second term
\begin{eqnarray*}
&& \sup_{x\in\mathcal{X}}\|P_{\alpha_{n-1}, \gamma_n}(x, \cdot) -
P_{\alpha_{n-1}, \gamma_{n-1}}(x, \cdot) \|_{\mathrm{TV}}
\\
&&\qquad \leq\sum_{i=1}^d \alpha_{n-1, i}\sup_{x\in
\mathcal{X}}\|P_{x_{-i}, \gamma_{n+1,i}}(x_i, \cdot) - P_{x_{-i},
\gamma_{n,i}}(x_i, \cdot)\|_{\mathrm{TV}}
\end{eqnarray*}
converges to $0$ in probability as a mixture of terms that converge to
$0$ in probability.
\end{pf*}

The following lemma can be used to verify assumption (d) of
Theorem~\ref{thmunifaMwG} (see also Example~\ref{examplediminish}
below).\vadjust{\goodbreak}
%
%
\begin{lemma} \label{lemmadiminishQP}
Assume that the adaptive proposals exhibit diminishing adaptation, that
is, for every $i \in\{1,\ldots, d\}$ the $\mathcal{G}_{n+1}$
measurable random variable
\[
\sup_{x\in\mathcal{X}}\|Q_{x_{-i}, \gamma_{n+1,i}}(x_i, \cdot) -
Q_{x_{-i}, \gamma_{n,i}}(x_i, \cdot)\|_{\mathrm{TV}} \to0
\qquad\mbox{in
probability, as } n \to\infty
\]
for fixed starting values $x_0 \in\stany$ and $\alpha_0 \in\mathcal{Y}$.

Then any of the following conditions:
\begin{longlist}[(ii)]
\item[(i)] The Metropolis proposals have symmetric densities, that
is,
\[
q_{x_{-i}, \gamma_{n,i}}(x_i, y_i) = q_{x_{-i}, \gamma
_{n,i}}(y_i, x_i),
\]
\item[(ii)] $\stany_i$ is compact for every $i$, $\pi$ is
continuous, everywhere positive and bounded,
\end{longlist}
implies condition \textup{(d)} of Theorem~\ref{thmunifaMwG}.
\end{lemma}
\begin{pf} 
The first statement can be concluded from Proposition 12.3 of~\cite
{AndrieuMoulines}, however, to be self-contained, we provide the argument.
Let $P_1$, $P_2$ denote transition kernels and $Q_1$, $Q_2$ proposal
kernels of two generic Metropolis algorithms for sampling from $\pi$ on
arbitrary state space $\stany$. To see that (i) implies (d) we
check that
\[
\|P_1(x, \cdot)-P_2(x, \cdot)\|_{\mathrm{TV}} \leq2\|Q_1(x, \cdot)-Q_2(x,
\cdot)\|_{\mathrm{TV}}.
\]
Indeed, the acceptance probability
\[
\alpha(x,y)=\min\biggl\{1, \frac{\pi(y)}{\pi(x)}\biggr\} \in[0,1]
\]
does not depend on the proposal, and for any $x\in\stany$ and $A \in
\borel$, we compute
\begin{eqnarray*}
|P_1(x, A)-P_2(x, A)| & \leq& \biggl|\int_A \alpha(x,y)
\bigl(q_1(y)-q_2(y)\bigr) \,\dd y\biggr| \\
&&{} + \mathbb{I}_{\{x \in A\}} \biggl|\int_{\mathcal{X}}
\bigl(1-\alpha(x,y)\bigr)\bigl(q_1(y)-q_2(y)\bigr) \,\dd y\biggr|
\\
& \leq& 2\|Q_1(x, \cdot)-Q_2(x, \cdot)\|_{\mathrm{TV}}.
\end{eqnarray*}
For the second statement note that condition (ii) implies there
exists $K < \infty$, s.t. $\pi(y)/\pi(x) \leq K$ for every $x,y \in
\stany$. To conclude that (d) results from (ii) note that
%
%
\begin{equation}\label{eqnabsval}
|{\min}\{a,b\} -\min\{c,d\}| < |a-c| +
|b-d|
\end{equation}
and recall acceptance probabilities $\alpha_i(x,y)=\min\{1,
\frac
{\pi(y)q_i(y,x)}{\pi(x)q_i(x,y)}\}$. Indeed, for any $x\in
\stany$
and $A \in\borel$, using (\ref{eqnabsval}), we have
\begin{eqnarray*}
&&
|P_1(x, A)-P_2(x, A)| \\
&&\qquad \leq \biggl| \int_A \biggl( \min\biggl\{q_1(x,y),
\frac{\pi(y)}{\pi(x)} q_1(y,x)\biggr\}
- \min\biggl\{q_2(x,y), \frac{\pi(y)}{\pi(x)} q_2(y,x) \biggr\}
\biggr) \,\dd y \biggr| \\
&&\qquad\quad{}
+ \mathbb{I}_{\{x \in A\}} \biggl|\int_{\mathcal{X}} \bigl(
\bigl(1-\alpha
_1(x,y)\bigr)q_1(x,y)
-
\bigl(1-\alpha_2(x,y)\bigr)q_2(x,y)\bigr) \,\dd y\biggr| \\
&&\qquad \leq 4(K+1) \|Q_1(x, \cdot)-Q_2(x, \cdot)\|_{\mathrm{TV}}
\end{eqnarray*}
and the claim follows since a random scan Metropolis-within-Gibbs
sampler is a mixture of Metropolis samplers.
\end{pf}

We now provide an example to show that diminishing adaptation of proposals
as in Lemma~\ref{lemmadiminishQP} does not necessarily imply condition
(d) of Theorem~\ref{thmunifaMwG} so some additional assumption is
required, for example, (i) or (ii) of Lemma~\ref{lemmadiminishQP}.
%
%
\begin{example} \label{examplediminish}
Consider a sequence of Metropolis algorithms with transition kernels
$P_1, P_2,\ldots$ designed for sampling from $\pi(k) = p^k(1-p)$ on
$\stany= \{0,1,\ldots\}$. The transition kernel $P_n$ results from
using proposal kernel $Q_n$ and the standard acceptance rule, where
\[
Q_n(j,k) = q_n(k):= \cases{\displaystyle
p^k
\biggl(\frac{1}{1-p}-p^n+p^{2n}\biggr)^{-1} ,&\quad for $k \neq n$,\vspace*{2pt}\cr
\displaystyle
p^{2n} \biggl(\frac{1}{1-p}-p^n+p^{2n}\biggr)^{-1} ,&\quad for $k = n$.}
\]
Clearly,
\[
\sup_{j \in\stany} \| Q_{n+1}(j,\cdot) -
Q_{n}(j,\cdot)\|_{\mathrm{TV}} = q_{n+1}(n) - q_n(n) \to0.
\]
However,
\begin{eqnarray*}
\sup_{j \in\stany} \| P_{n+1}(j,\cdot) - P_{n}(j,\cdot)\|_{\mathrm{TV}}
&\geq&
P_{n+1}(n,0) - P_{n}(n,0) \\
&=&
\min\biggl\{q_{n+1}(0),
\frac{\pi(0)}{\pi(n)}q_{n+1}(n)\biggr\} \\
&&{}
- \min\biggl\{q_{n}(0), \frac{\pi(0)}{\pi(n)}q_{n}(n)\biggr\} \\
& = & q_{n+1}(0) - q_n(0)p^n \\
&\to&1-p \neq0.
\end{eqnarray*}
\end{example}

\section{Ergodicity---nonuniform case}
\label{secergnonunif} \label{secnonunifadapRSadapMwG}

$\!\!\!$In this section we consider the case~where nonadaptive kernels are not
necessary uniformly ergodic. We study~adaptive random scan Gibbs
adaptive Metropolis-within-Gibbs (\texttt{AdapRSadapMwG}) algorithms in
the nonuniform setting, with parameters $\alpha\in\mathcal{Y}$ and
$\gamma_i \in\Gamma_i, i=1,\ldots, d$, subject to adaptation. The
conclusions we draw apply immediately to adaptive random scan Gibbs
Metropolis-within-Gibbs (\texttt{AdapRSMwG}) algorithms by keeping the
parameters $\gamma_i$ fixed for the Metropolis-within-Gibbs steps.

We keep the assumption that selection probabilities are in $\Y$ defined
in (\ref{Ydef}), whereas the uniform ergodicity assumption will be
replaced by some natural regularity conditions on the target density.

Our strategy is to use the generic approach of~\cite{RobRosJAP} and to
verify the diminishing adaptation and the containment conditions. The
containment condition has been extensively studied in~\cite{BaiRobRos}
and it is essentially necessary for ergodicity of adaptive chains (see
Theorem 2 therein for the precise result). In particular, containment
is implied by simultaneous geometrical ergodicity for the adaptive
kernels. More precisely, we shall use the following result
of~\cite{BaiRobRos}.

%
\begin{theorem}[(Corollary 2 of~\cite{BaiRobRos})]\label{thmforcontainment}
Consider the family $\{P_{\gamma}\dvtx\gamma\in\Gamma\}$ of Markov
chains on $\mathcal{X} \subseteq\mathbb{R}^d$, satisfying the
following conditions:
\begin{longlist}[(ii)]
\item[(i)] for any compact set $C \in\mathcal{B}(\mathcal{X})$,
there exist some integer $m>0$, and real $\rho>0$, and a probability
measure $\nu_{\gamma}$ on $C$ s.t.
\[
P^m_{\gamma}(x, \cdot) \geq\rho\nu_{\gamma}(\cdot)
\qquad\mbox{for all } x \in C,
\]
\item[(ii)] there exists a function $V\dvtx\mathcal{X} \to(1,
\infty
)$, s.t. for any compact set $C \in\mathcal{B}(\mathcal{X})$, we have
$\sup_{x\in C}V(x) < \infty$, $ \pi(V) < \infty$, and
\[
\limsup_{|x| \to\infty} \sup_{\gamma\in\Gamma} \frac{P_{\gamma
}V(x)}{V(x)} < 1,
\]
\end{longlist}
then for any adaptive strategy using $\{P_{\gamma}\dvtx\gamma\in\Gamma
\}
$, containment holds.
\end{theorem}

Throughout this section we assume $\mathcal{X}_i = \mathbb{R}$ for
$i=1,\ldots, d$, and $\mathcal{X} = \mathbb{R}^d$ and let $\mu_k$ denote
the Lebsque measure on $\mathbb{R}^k$. By $\{e_1,\ldots, e_d\}$ denote
the coordinate unit vectors and let $|\cdot|$ be the Euclidean norm.

Our focus is on random walk Metropolis proposals with symmetric
densities for updating $X_i|X_{-i}$ denoted as $q_{i,\gamma_i}(\cdot)$,
$\gamma_i \in\Gamma_i$. We shall work in the following setting,
extensively studied for nonadaptive Metropolis-within-Gibbs algorithms
in~\cite{FortJAP} (see also~\cite{RobRoselectr,RobRoshybridAAP} for
related work and~\cite{JarnerHansen} for analysis of the random walk
Metropolis algorithm).
%

\begin{ass}\label{assucontdens}
The target distribution $\pi$ is absolutely continuous with respect to
$\mu_d$ with strictly positive and continuous density $\pi(\cdot)$ on
$\mathcal{X}$.
\end{ass}

%
\begin{ass}\label{assulocpos}
The family $\{q_{i, \gamma_i} \}_{1 \leq i \leq d; \gamma_i \in
\Gamma
_i}$ of symmetric proposal densities with respect to $\mu_1$
(one-dimensional Lebesgue measure) is such that there exist
constants $\eta_i > 0, \delta_i > 0$, for $i=1,\ldots, d$, s.t.
%
%
\begin{equation}
\inf_{|x|\leq\delta_i}q_{i, \gamma_i}(x) \geq\eta_i
\qquad\mbox{for every } 1\leq i \leq d
\quad\mbox{and}\quad \gamma_i
\in
\Gamma_i.
\end{equation}
\end{ass}
%
%
\begin{ass}\label{assuseq} There exist $0 < \delta< \Delta\leq
\infty
$, such that
%
%
\begin{equation} \label{eqnfirst} 
\xi:= \inf_{1\leq i \leq d, \gamma_i \in\Gamma_i} \int_{\delta
}^{\Delta} q_{i, \gamma_i}(y) \mu_1(dy) > 0
\end{equation}
and,
for any sequence $x = \{x^j\}$ with $\lim_{j\to\infty}|x^j| =
+\infty
$, there exists a subsequence $\tilde{x} = \{\tilde{x}^j\}$ s.t. for
some $i \in\{1,\ldots, d\}$ and all $y \in[\delta, \Delta]$,
%
%
\begin{equation}\label{eqn54eq} 
\lim_{j \to\infty}\frac{\pi(\tilde{x}^j)}{\pi(\tilde{x}^j -
\operatorname{sign}(\tilde
{x}^j_i)ye_i)} = 0 \quad\mbox{and}\quad \lim_{j \to\infty}\frac{\pi
(\tilde
{x}^j + \operatorname{sign}(\tilde{x}^j_i)ye_i)}{\pi(\tilde{x}^j)} =
0.\hspace*{-28pt}
\end{equation}
\end{ass}

Discussion of the seemingly involved~\ref{assuseq} and simple
criterions for checking it are given in~\cite{FortJAP}. It was shown
in~\cite{FortJAP} that under these assumptions nonadaptive random scan
Metropolis-within-Gibbs algorithms are geometrically ergodic for
subexponential densities. We establish ergodicity of the doubly
adaptive \texttt{AdapRSadapMwG} algorithm in the same setting.
%
%
\begin{theorem}\label{thmnonunifexp}
Let $\pi$ be a subexponential density and let the selection
probabilities $\alpha_n \in\mathcal{Y}$ for all $n$, with $\mathcal
{Y}$ as in (\ref{Ydef}). Moreover assume that:
\begin{longlist}[(a)]
\item[(a)] $|\alpha_n - \alpha_{n-1}| \to0$ in probability for
fixed starting values $x_0 \in\stany$ and $\alpha_0 \in\mathcal{Y}$,
$\gamma_i \in\Gamma_i$, $ i= 1,\ldots, d;$
\item[(b)] The Metropolis-within-Gibbs kernels exhibit diminishing
adaptation, that is, for every $i \in\{1,\ldots, d\}$ the $\mathcal
{G}_{n+1}$ measurable random variable
\[
\sup_{x\in\mathcal{X}}\|P_{x_{-i}, \gamma_{n+1,i}}(x_i, \cdot) -
P_{x_{-i}, \gamma_{n,i}}(x_i, \cdot)\|_{\mathrm{TV}} \to0
\qquad\mbox{in
probability, as } n \to\infty
\]
for fixed starting values $x_0 \in\stany$ and $\alpha_0 \in\mathcal
{Y}$, $\gamma_i \in\Gamma_i$, $ i= 1,\ldots, d;$
\item[(c)]
Assumptions~\ref{assucontdens},~\ref{assulocpos},
\ref{assuseq} hold.
\end{longlist}
Then \texttt{\textit{AdapRSadapMwG}} is ergodic, that is,
%
%
\begin{equation}\label{eqnthmnonunifexp} T(x_0, \alpha_0, \gamma_0, n)
\to0 \qquad\mbox{as } n \to\infty.
\end{equation}
\end{theorem}

Before proving this result we state its counterpart for densities that
are log-concave in the tails. This is another typical setting carefully
studied in the context of geometric ergodicity of nonadaptive chains
\cite{FortJAP,RobRoshybridAAP,MengersenTweedie} where Assumption
\ref{assuseq} is replaced by the following two conditions.
%
%
\begin{ass}\label{assulogconcave}
There exists an $\phi> 0$ and $\delta$ s.t. $1/\phi\leq\delta<
\Delta\leq\infty$ and, for any sequence $x:= \{x^j\}$ with $\lim_{j
\to\infty}|x^j| = +\infty$, there exists a subsequence $\tilde
{x}:=\{
\tilde{x}^j\}$ s.t. for some $i \in\{1,\ldots, d\}$ and for all $y
\in
[\delta, \Delta]$,
%
%
\begin{eqnarray}
&&\lim_{j \to\infty}\frac{\pi(\tilde{x}^j)}{\pi(\tilde{x}^j -
\operatorname{sign}(\tilde
{x}^j_i)ye_i)} \leq\exp\{-\phi y\} \quad\mbox{and}
\nonumber\\[-8pt]\\[-8pt]
&&\lim_{j \to
\infty}\frac{\pi(\tilde{x}^j + \operatorname{sign}(\tilde{x}^j_i)ye_i)}{\pi
(\tilde
{x}^j)} \leq\exp\{-\phi y \}. \nonumber
\end{eqnarray}
\end{ass}
%
%
\begin{ass}\label{assnonunif14}
\[
\inf_{1 \leq i \leq d, \gamma_i \in\Gamma_i } \int_{\delta
}^{\Delta}
yq_{i,\gamma_i}(y) \mu_1(dy) \geq\frac{8}{\varepsilon\phi(e-1)}.
\]
\end{ass}
%
%
\begin{remark}
As remarked in~\cite{FortJAP}, Assumption~\ref{assulogconcave}
generalizes the one-dimensional definition of log-concavity in the
tails and Assumption~\ref{assnonunif14} is easy to ensure, at least if
$\Delta= \infty$, by taking the proposal distribution to be a mixture
of an adaptive component and a uniform on $[-U,U]$ for $U$ large enough
or a mean zero Gaussian with large enough variance.
\end{remark}
%
%
\begin{theorem}\label{thmnonuniflogconcave}
Let the selection probabilities $\alpha_n \in\mathcal{Y}$ for all $n$,
with $\mathcal{Y}$ as in~(\ref{Ydef}). Moreover, assume that:
\begin{longlist}[(a)]
\item[(a)] $|\alpha_n - \alpha_{n-1}| \to0$ in probability for
fixed starting values $x_0 \in\stany$ and $\alpha_0 \in\mathcal{Y}$,
$\gamma_i \in\Gamma_i$, $ i= 1,\ldots, d;$
\item[(b)] The Metropolis-within-Gibbs kernels exhibit diminishing
adaptation, that is, for every $i \in\{1,\ldots, d\}$ the $\mathcal
{G}_{n+1}$ measurable random variable
\[
\sup_{x\in\mathcal{X}}\|P_{x_{-i}, \gamma_{n+1,i}}(x_i, \cdot) -
P_{x_{-i}, \gamma_{n,i}}(x_i, \cdot)\|_{\mathrm{TV}} \to0
\qquad\mbox{in
probability, as } n \to\infty
\]
for fixed starting values $x_0 \in\stany$ and $\alpha_0 \in\mathcal
{Y}$, $\gamma_i \in\Gamma_i$, $ i= 1,\ldots, d;$
\item[(c)] Assumptions~\ref{assucontdens},~\ref{assulocpos},
\ref{assulogconcave},~\ref{assnonunif14} hold.
\end{longlist}
Then \texttt{\textit{AdapRSadapMwG}} is ergodic, that is,
%
%
\begin{equation}\label{eqnthmnonunifexp}
T(x_0, \alpha_0, \gamma_0, n)
\to0 \qquad\mbox{as } n \to\infty.
\end{equation}
\end{theorem}

We now proceed to proofs.
\begin{pf*}{Proof of Theorem~\ref{thmnonunifexp}}
Ergodicity will follow from Theorem 2 of~\cite{RobRosJAP} by
establishing diminishing adaptation and containment condition.
Diminishing adaptation can be verified as in the proof of Theorem \ref
{thmunifaMwG}. Containment will result from Theorem~\ref{thmforcontainment}.

Recall that $P_{\alpha, \gamma}$ is the random scan
Metropolis-within-Gibbs kernel with selection probabilities $\alpha$
and proposals indexed by $\{\gamma_i\}_{1\leq i \leq d}$. To verify the
small set condition (i), observe that Assumptions~\ref{assucontdens}
and~\ref{assulocpos} imply that for every compact set $C$ and every
vector $\gamma_i \in\Gamma_i$, $ i \in1,\ldots, d$, we can find $m^*$
and $\rho^*$ independent of $\{\gamma_i\}$, and such that
$P^{m^*}_{1/d, \gamma}(x, \cdot) \geq\rho^* \nu(\cdot)$ for all $x
\in C$. Hence, arguing as in the proof of Lemma \ref
{lemmauniformlyuniformly}, there exist $m$ and $\rho$, independent of
$\alpha\in\mathcal{Y}$ and $\{\gamma_i\}$, such that $P^{m}_{\alpha,
\gamma}(x, \cdot) \geq\rho\nu(\cdot)$ for all $x \in C$.

To establish the drift condition (ii), let $V_s:= \pi(x)^{-s}$ for
some $s \in(0,1)$ to be specified later. Then by Proposition 3 of
\cite
{RobRoshybridAAP}, for all $1 \leq i \leq d$, $\gamma_i \in\Gamma_i$,
and $x \in\mathbb{R}^d$ we have
%
%
\begin{equation} \label{eqndriftnottoomuch}
P_{i,\gamma_i} V_s(x) \leq r(s) V_s(x)
\qquad\mbox{where } r(s): =
1+s(1-s)^{1/s - 1}.
\end{equation}
Since $r(s) \to1$ as $s \to0$, we can choose $s$ small enough, so
that
%
%
\begin{equation} \label{eqnrssmallenough}\label{eq}
r(s) < 1 + \frac{\varepsilon\xi}{1 - 2 \varepsilon\xi}.
\end{equation}

The rest of the argument follows the proof of Theorem 2 in \cite
{FortJAP}. We repeat most of it since we need to ensure it is
independent of $\alpha$ and $\gamma$. Assume by contradiction that
there exists an $\mathbb{R}^d$-valued sequence $\{x^j\}$ s.t.
\[
\limsup_{j\to\infty} \sup_{ \alpha\in\mathcal{Y}, \gamma_i \in
\Gamma
_i, 1 \leq i \leq d} P_{\alpha, \gamma} V_s(x^j) / V_s(x^j) \geq1.
\]
Then there exists a subsequence $\{\hat{x}^j\}$ such that
\[
\lim_{j\to\infty} \sup_{ \alpha\in\mathcal{Y}, \gamma_i \in
\Gamma
_i, 1 \leq i \leq d} P_{\alpha, \gamma} V_s(\hat{x}^j) / V_s(\hat{x}^j)
\geq1.
\]
Moreover, as shown in~\cite{FortJAP}, proof of Theorem 2, page 129,
there exists an integer $k \in\{1,\ldots, d\}$ and a further
subsequence $\{\tilde{x}^j\}$, such that
%
%
\begin{equation}\label{eqndriftworks}
\lim_{j\to\infty} \sup_{\gamma_k \in\Gamma_k} P_{k, \gamma_k}
V_s(\tilde{x}^j) / V_s(\tilde{x}^j) \leq r(s) - \bigl(2r(s) - 1\bigr) \xi.
\end{equation}
The contradiction follows from (\ref{eqndriftnottoomuch}), (\ref
{eqnrssmallenough}) and (\ref{eqndriftworks}), since
\begin{eqnarray*}
&&
\lim_{j \to\infty} \sup_{ \alpha\in\mathcal{Y}, \gamma_i \in
\Gamma_i, 1 \leq i \leq d} \frac{P_{\alpha, \gamma}V_s(\tilde
{x}^j)}{V_s(\tilde{x}^j)} \\
&&\qquad= \lim_{j \to\infty} \sup
_{\alpha
\in\mathcal{Y}}\sum_{i=1}^d \alpha_i \sup_{\gamma_i \in\Gamma
_i} \frac
{P_{i, \gamma_i}V_s(\tilde{x}^j)}{V_s(\tilde{x}^j)} \\
&&\qquad=
\lim_{j \to\infty} \sup_{\alpha\in\mathcal{Y}}\biggl(\alpha_k
\sup
_{\gamma_k \in\Gamma_k} P_{k, \gamma_k} V_s(\tilde{x}^j) /
V_s(\tilde
{x}^j) + \sum_{i\neq k} \alpha_i \sup_{\gamma_i \in\Gamma_i}
\frac
{P_{i, \gamma_i}V_s(\tilde{x}^j)}{V_s(\tilde{x}^j)}\biggr) \\
&&\qquad\leq \varepsilon\bigl(r(s) - \bigl(2r(s) - 1\bigr) \xi\bigr) +
(1-\varepsilon) r(s) < 1.
\end{eqnarray*}
\upqed\end{pf*}
\begin{pf*}{Proof of Theorem~\ref{thmnonuniflogconcave}}
The proof is identical to the proof of Theorem~\ref{thmnonunifexp} with
the only difference that now the drift condition (ii) of Theorem~\ref
{thmforcontainment} will be established under Assumptions \ref
{assulogconcave} and~\ref{assnonunif14}.

Establishing (ii) of Theorem~\ref{thmforcontainment} will follow
closely the proof of Theorem~3 in~\cite{FortJAP}. Let again $V_s:= \pi
(x)^{-s}$ for some $s \in(0,1)$ to be specified later and recall that
(\ref{eqndriftnottoomuch}) holds for all $1 \leq i \leq d$, $\gamma_i
\in\Gamma_i$, and $x \in\mathbb{R}^d$. Assume by contradiction that
there exists an $\mathbb{R}^d$-valued sequence $\{x^j\}$ s.t.
\[
\limsup_{j\to\infty} \sup_{ \alpha\in\mathcal{Y}, \gamma_i \in
\Gamma
_i, 1 \leq i \leq d} P_{\alpha, \gamma} V_s(x^j) / V_s(x^j) \geq1.
\]
Then there exists a subsequence $\{\hat{x}^j\}$ such that
\[
\lim_{j\to\infty} \sup_{ \alpha\in\mathcal{Y}, \gamma_i \in
\Gamma
_i, 1 \leq i \leq d} P_{\alpha, \gamma} V_s(\hat{x}^j) / V_s(\hat{x}^j)
\geq1.
\]
Moreover, as shown in~\cite{FortJAP}, proof of Theorem 3, page 137,
equation (15), there exists an integer $k \in\{1,\ldots, d\}$ and a
further subsequence $\{\tilde{x}^j\}$, such that
%
%
\begin{eqnarray}\label{eqndriftagain}
\lim_{j\to\infty}
P_{k, \gamma_k} V_s(\tilde{x}^j) / V_s(\tilde{x}^j) & \leq& r(s) -
\bigl(2r(s) - 1\bigr)\mathcal{J}_{\gamma_k} (0) +
\mathcal
{J}_{\gamma_k}(\phi s)\nonumber\\[-8pt]\\[-8pt]
&&{}  + \mathcal{J}_{\gamma_k}\bigl(\phi(1-s)\bigr) -
\mathcal
{J}_{\gamma_k}(\phi),\nonumber
\end{eqnarray}
where for $b > 0$,
\[
\mathcal{J}_{\gamma_k}(b) = \int_{\delta}^{\Delta} e^{-by}
q_{k, \gamma_k}(y) \mu_1(dy).
\]
Now from (\ref{eqndriftnottoomuch}) and (\ref{eqndriftagain}) compute
\begin{eqnarray*}
&&\lim_{j \to\infty} \sup_{ \alpha\in\mathcal{Y}, \gamma_i \in
\Gamma_i, 1 \leq i \leq d} \frac{P_{\alpha, \gamma}V_s(\tilde
{x}^j)}{V_s(\tilde{x}^j)} \\
&&\qquad= \lim_{j \to\infty} \sup
_{\alpha
\in\mathcal{Y}}\sum_{i=1}^d \alpha_i \sup_{\gamma_i \in\Gamma
_i} \frac
{P_{i, \gamma_i}V_s(\tilde{x}^j)}{V_s(\tilde{x}^j)} \\
&&\qquad =
\lim_{j \to\infty} \sup_{\alpha\in\mathcal{Y}}\biggl(\alpha_k
\sup
_{\gamma_k \in\Gamma_k} P_{k, \gamma_k} V_s(\tilde{x}^j) /
V_s(\tilde
{x}^j) + \sum_{i\neq k} \alpha_i \sup_{\gamma_i \in\Gamma_i}
\frac
{P_{i, \gamma_i}V_s(\tilde{x}^j)}{V_s(\tilde{x}^j)}\biggr) \\
&&\qquad \leq r(s) - \varepsilon\inf_{\gamma_k \in\Gamma_k}
\bigl(\bigl(2r(s) - 1\bigr)\mathcal{J}_{\gamma_k} (0) + \mathcal{J}_{\gamma
_k}(\phi s)
+ \mathcal{J}_{\gamma_k}\bigl(\phi(1-s)\bigr) - \mathcal{J}_{\gamma_k}(\phi
)\bigr)
\\
&&\qquad = \sup_{\gamma_k \in\Gamma_k} \bigl( r(s) - \varepsilon
\bigl(\bigl(2r(s) - 1\bigr)\mathcal{J}_{\gamma_k} (0) + \mathcal{J}_{\gamma
_k}(\phi s)
+ \mathcal{J}_{\gamma_k}\bigl(\phi(1-s)\bigr) - \mathcal{J}_{\gamma_k}(\phi
)
\bigr)\bigr) \\
&&\qquad =: \sup_{\gamma_k \in\Gamma_k} \mathcal{H}(\gamma_k,
\phi
, s).
\end{eqnarray*}
The result will follow if we can find such an $s$ that $\sup_{\gamma_k
\in\Gamma_k} \mathcal{H}(\gamma_k, \phi, s) < 1$. Note that
$\mathcal
{H}(\gamma_k, \phi, 0) = 1$ for every $\gamma_k \in\Gamma_k$ and the
function is differentiable. Therefore, it is enough to show that there
exist $\kappa_1 > 0$ and $\kappa_2 > 0$
such that
\[
{\partial\over\partial s} \mathcal{H}(\gamma_k, \phi, s) <
- \kappa_1 \qquad\mbox{for all } \gamma_k \in\Gamma_k
\mbox{ and } s \in(0,\kappa_2)
\]
and conclude (ii) with $V_s(x) = \pi^{-s}(x)$ and $s:= \kappa_2$.
To this end compute
\begin{eqnarray*}
{1 \over\varepsilon} \,{\partial\over\partial s} \mathcal{H}(\gamma
_k, \phi, s) & = &
\biggl( {1\over\varepsilon} - 2 \mathcal{J}_{\gamma_k}(0)\biggr)
\,{\partial
\over\partial s} r(s)
- \phi\int_{\delta}^{\Delta} y e^{-\phi s y} q_{\gamma_k}(y) \mu_1(dy)
\\
&&{} + \phi\int_{\delta}^{\Delta} y e^{-\phi(1-s) y}
q_{\gamma_k}(y) \mu_1(dy) \\
& = & {1 \over\varepsilon} {(1-s)^{1/s} \log(1-s) \over s(s-1)} -
\phi
I_1 + \phi I_2 =: \clubsuit,
\end{eqnarray*}
and notice that by $1/\phi\leq\delta$ and Assumption \ref
{assnonunif14}, for $s$ small enough we have
\begin{eqnarray*}
I_1 - I_2 & \geq& {e-1 \over2e } \int_{\delta}^{\Delta} y
q_{\gamma
_k}(y) \mu_1(dy) \\ & \geq& {e-1 \over2e } { 8 \over\varepsilon
\phi
(e-1) } = {4 \over\varepsilon\phi e}
\end{eqnarray*}
and
\[
{(1-s)^{1/s} \log(1-s) \over s(s-1)} \leq {2 \over e}.
\]
Consequently there exists $\kappa_2 > 0$ s.t. for all $s \in(0,
\kappa_2)$
\[
\clubsuit\leq{2 \over\varepsilon e} - {4 \phi\over\varepsilon
\phi e} = - {2 \over\varepsilon e} =: \kappa_1
< 0.
\]
\upqed\end{pf*}

%
%
\begin{example}
\label{exGLMM}
We now give an example involving a simple generalized linear mixed model.
Consider the model and prior given by
%
%
\begin{eqnarray}
\label{eqnglmmmodel}
Y_i &\sim& \operatorname{Pois} ( e^{\theta+ X_i} ), \\
X_i &\sim& N(0,1),\\
\theta&\sim& N(0,1).
\end{eqnarray}
The model is chosen to be extremely simple so as to not detract from the
argument used to demonstrate ergodicity of \texttt{adapRSadapMwG},
although this argument readily generalizes to different exponential
families, link functions and random effect distributions.

We consider simulating from the posterior distribution of $\theta,
\mathbf{X}$ given observations $y_1, \ldots, y_n$ using \texttt{adapRSadapMwG}.
More specifically
we set
%
%
\begin{equation} \label{eqnspecificq}
q_{x_{-i}, \gamma} (x_i, y_i) =
{{ \exp
\{-(y_i - x_i)^2/2\gamma\} }\over{
\sqrt{2 \pi\gamma} } },
\end{equation}
where the range of permissible scales $\gamma$ is restricted to be in some
range ${\Re} = [a,b]$ with $0<a\le b < \infty$. We are in the
subexponential tail case and specifically we have the following.
\end{example}
%
%
\begin{prop}
\label{propGLMM}
Consider \texttt{\textit{adapRSadapMwG}} applied to model (\ref
{eqnglmmmodel}) using
any adaptive scheme satisfying the conditions \textup{(a)} and \textup{(b)} of
Theorem~\ref{thmnonunifexp}. Then the scheme is ergodic.
\end{prop}

For the proof, we require the following definition from
\cite{FortJAP}. We let
\[
\Phi= \{\hbox{functions }\phi\dvtx\Rl^+ \to\Rl^+; \phi(x) \to
\infty
\hbox{ as }x \to\infty\}.
\]
\begin{pf*}{Proof of Proposition~\ref{propGLMM}}
According to Theorem~\ref{thmnonunifexp}, it remains to check conditions
\ref{assucontdens},~\ref{assulocpos},~\ref{assuseq} hold.
Conditions~\ref{assucontdens} and~\ref{assulocpos} hold by construction,
while condition~\ref{assuseq} consists of two separate conditions.
One of these, given in (\ref{eqnfirst}), holds by construction from
(\ref{eqnspecificq}).
Moreover,~\cite{FortJAP} shows that (\ref{eqn54eq}) can be replaced by
the following condition: there exist functions $\{ \phi_i \in\Phi,
1 \le i \le d\}$ such that
$i \in\{1,\ldots, d\}$ and
all $y \in[\delta, \Delta]$,
%
%
\begin{equation}\label{eqnalternative}
\lim_{|x_i | \to\infty}
\sup_{\{
x_{-i}; \phi_j(|x_j|) \le\phi_i (|x_i|), j \neq i
\}}
\frac{\pi(\tilde{x}^j)}{\pi(\tilde{x}^j - \operatorname{sign}(\tilde
{x}^j_i)ye_i)} = 0
\end{equation}
and
\begin{equation}
\lim_{|x_i| \to\infty}
\sup_{\{
x_{-i}; \phi_j(|x_j|) \le\phi_i (|x_i|), j \neq i
\}}
\frac{\pi(\tilde{x}^j +
\operatorname{sign}(\tilde{x}^j_i)ye_i)}{\pi(\tilde{x}^j)} = 0.
\end{equation}
Now take $\phi_i(x) = x$ for all $1\le i\le d$ so that (\ref{eqnalternative})
can be rewritten as the two conditions
%
%
\begin{eqnarray}\label{eqnalternative2}\quad
\lim_{|x_i | \to\infty}
\sup_{\{
x_{-i}; |x_j| \le|x_i|, j \neq i
\}}
\exp\biggl\{
\int_{-y}^0 \nabla_i \log\pi\bigl( x+\operatorname{sign}(x_i) z e_i\bigr) \,\dd z
\biggr\} &=& 0,\\
\label{plus}
\lim_{|x_i| \to\infty}
\sup_{
\{
x_{-i}; |x_j| \le|x_i|, j \neq i \}
}
\exp\biggl\{
\int_{0}^y \nabla_i \log\pi\bigl( x+\operatorname{sign}(x_i) z e_i\bigr) \,\dd z
\biggr\} &=& 0
\end{eqnarray}
for all $y \in[\delta, \Delta]$,
where $\nabla_i$ denotes the derivative in the $i$th direction.
We shall show that uniformly on the set
$S_i(x_i)$, which is defined to be $\{ x_{-i}; |x_j| \le|x_i|, j
\neq i \}$,
the function $\nabla_i \log\pi(x)$ converges to $-\infty$ as $x_i
\to+ \infty$ and to $ +\infty$ as $x_i$ approaches $-\infty$.

Now we have $d=n+1$ and let $i$ correspond to the component $x_i$ for
$1 \le i \le n$ with $n+1$ denoting the component $\theta$.
Therefore, for $1\le i \le n$,
\[
\nabla_i \log\pi(x) = -e^{\theta+ x_i} + y_i - x_i
\]
and
\[
\nabla_{n+1} \log\pi(x) = - \sum_{i=1}^n e^{\theta+ x_i} -
\sum_{i=1}^n y_i - \theta.
\]
Now for $x_i >0$, $1\le i \le n$
\[
\nabla_i \log\pi(x) \ge y_i -x_i,
\]
which is diverging to $-\infty$ independently of $x_{-i}$.
Similarly,
\[
\nabla_{n+1} \log\pi(x) \ge\sum_{i=1}^n y_i -\theta
\]
diverging to $-\infty$ independently of $\{x_i; 1 \le i \le n\}$.

For $x_i <0$, $1\le i \le n$ and $(x_{-i}, \theta) \in S_i (x_i)$,
\[
\nabla_i \log\pi(x) \le y_i -x_i +1
\]
again diverging to $+\infty$ uniformly.
Finally, for $\theta<0$ and $x \in S_{n+1} (\theta)$,
\[
\nabla_{n+1} \log\pi(x) \ge-n +\sum_{i=1}^n y_i - \theta,
\]
again demonstrating the required uniform convergence.
Thus ergodicity holds.
\end{pf*}
%
%
\begin{remark}
The random effect distribution in Example~\ref{exGLMM} can be altered to
give different results. For instance, if the distribution is doubly exponential,
Theorem~\ref{thmuniformmain} can be applied using very similar
arguments to those used above. Extensions to more complex hierarchical models
are clearly possible though we do not pursue this here.
\end{remark}
%
%
\begin{remark}
An important problem that we have not focused on involves the
construction of
explicit adaptive strategies. Since little is known about the
optimization of the
random scan random walk Metropolis, even in the nonadaptive case, this
is not
a straightforward question. We are engaged in further work exploring
adaptation to attempt to maximize a given optimality criterion for the
chosen class of samplers. Two possible strategies are:\looseness=-1
\begin{itemize}
\item
to scale the proposal variance to approach $2.4$ times the empirically
observed conditional variance;
\item
to scale the proposal variance to achieve an algorithm with acceptance
proportion approximately $0.44$.
\end{itemize}\looseness=0
Both these methods are founded in theoretical arguments (see, e.g.,
\cite{RobRosscaling2001}).
\end{remark}


\section{\texorpdfstring{Proof of Proposition \protect\ref{factXnonergodic}}{Proof of Proposition 3.2}}
\label{seccounterproof}

The analysis of Example~\ref{exstairwaytoheaven2} is somewhat
delicate since the process is both time and space inhomogeneous (as
are most nontrivial adaptive MCMC algorithms). To establish
Proposition~\ref{factXnonergodic}, we will define a couple of auxiliary
stochastic processes. Consider the following one-dimensional process
$\lancucht$ obtained from $\lancuch$ by
\[
\tilde{X}_n:= X_{n,1} +
X_{n,2}-2.
\]
Clearly $\tilde{X}_n - \tilde{X}_{n-1} \in\{-1,0,1\}$;
moreover, $X_{n,1}\to\infty$ and $X_{n,2} \to\infty$ if and only if
$\tilde{X}_n \to\infty$. Note that the dynamics of $\lancucht$ are
also both time and space inhomogeneous.\vadjust{\goodbreak}

We will also use an auxiliary random walk-like space homogeneous
process
\[
S_0 = 0 \quad\mbox{and}\quad S_n := \sum_{i=1}^n Y_i
\qquad\mbox{for } n\geq1,
\]
where $Y_1, Y_2,\ldots$ are independent random variables taking values
in $\{-1,0,1\}$. Let the distribution of $Y_n$ on $\{-1,0,1\}$ be
%
%
\begin{equation}\label{eqndistrofY} \nu_n:= \biggl\{\frac
{1}{4}-\frac{1}{a_n}, \frac{1}{2}, \frac{1}{4}+\frac{1}{a_n}
\biggr\}.
\end{equation}

We shall couple $\lancucht$ with $\lancuchs$, that is, define them on
the same probability space $\{\Omega, \mathcal{F}, \mathbb{P}\}$, by
specifying the joint distribution of $(\tilde{X}_n, S_n)_{n \geq0}$ so
that the marginal distributions remain unchanged. We describe the
details of the construction later. Now define
%
%
\begin{equation}\label{eqndefOmegaXS} \Omega_{\tilde{X}\geq S}:= \{
\omega\in\Omega\dvtx\tilde{X}_n(\omega) \geq S_n(\omega)
\mbox{ for
every } n\}
\end{equation}
and
%
%
\begin{equation}\label{eqndefOmegainfty} \Omega_{\infty}:= \{ \omega
\in\Omega\dvtx S_n(\omega) \to\infty\}.
\end{equation}
Clearly, if $\omega\in\Omega_{\tilde{X}\geq S}\cap\Omega_{\infty}$,
then $\tilde{X}_n(\omega) \to\infty$. In the sequel we show that for
our coupling construction
%
%
\begin{equation}\label{eqngoal} \mathbb{P}(\Omega_{\tilde{X}\geq
S}\cap
\Omega_{\infty}) > 0.
\end{equation}

We shall use Hoeffding's inequality for $S_k^{k+n}:= S_{k+n} - S_k$.
Since $Y_n \in[-1,1]$, it yields for every $t> 0$,
%
%
\begin{equation}
\label{eqnHoefffirsttime}
\mathbb{P}(S_k^{k+n} - \mathbb{E}S_k^{k+n} \leq-nt) \leq\exp\bigl\{
-\tfrac
{1}{2}nt^2\bigr\}.
\end{equation}
Note that $\mathbb{E}Y_n = 2/a_n$ and thus $\mathbb{E}S_{k}^{k+n} =
2\sum_{i=k+1}^{k+n}1/a_i$. The following choice for the sequence $a_n$
will facilitate further calculations. Let
\begin{eqnarray*}
b_0 & = & 0, \\
b_1 & = & 1000, \\
b_{n}& = &b_{n-1} \biggl(1 + \frac{1}{10 + \log(n)}\biggr)
\qquad\mbox{for } n \geq2,\\
c_n & = & \sum_{i=0}^{n}b_i,\\
a_n & = & 10+\log(k) \qquad\mbox{for }
c_{k-1} < n \leq c_{k}.
\end{eqnarray*}

%
\begin{remark} To keep notation reasonable we ignore the fact that
$b_n$ will not be an integer. It should be clear that this does not
affect proofs, as the constants we have defined, that is, $b_1$ and
$a_1$, are bigger than required.
\end{remark}
%
%
\begin{lemma}\label{lemmaSnabove} Let $Y_n$ and $S_n$ be as defined
above and let
%
%
\begin{eqnarray}\quad
\Omega_1 &:= & \{\omega\in\Omega\dvtx S_k =k \mbox{ for
every } 0 < k \leq c_1\},\\
\label{eqnlemmaSnabove}
\Omega_n &:=& \biggl\{\omega\in\Omega\dvtx S_k \geq\frac{b_{n-1}}{2}
\mbox{ for every } c_{n-1} < k \leq c_n\biggr\} \qquad\mbox{for
} n\geq2.
\end{eqnarray}
Then
%
%
\begin{equation} \label{eqnlemmaYnabove}
\mathbb{P}\Biggl( \bigcap_{n=1}^{\infty} \Omega_n\Biggr) > 0.
\end{equation}
\end{lemma}
%
%
\begin{remark}
Note that $b_n \nearrow\infty$ and therefore $\bigcap_{n=1}^{\infty}
\Omega_n \subset\Omega_{\infty}$.
\end{remark}
\begin{pf*}{Proof of Lemma~\ref{lemmaSnabove}}
With positive probability, say $p_{1,S}$, we have $Y_1 =\cdots=
Y_{1000}=1$ which gives $S_{c_1} = 1000 = b_1$. Hence, $\mathbb
{P}(\Omega_1) = p_{1,S} >0$. Moreover, recall that
$S_{c_{n-1}}^{c_{n}}$ is a sum of $b_n$ i.i.d. random variables with
$\mathbb{E}S_{c_{n-1}}^{c_{n}} = \frac{2b_n}{10+\log(n)}$. Therefore,
for every $n \geq1$ by Hoeffding's inequality with $t= 1/(10+\log
(n))$, we can also write
\[
\mathbb{P}\biggl(S_{c_{n-1}}^{c_n} \leq\frac{b_n}{10+\log(n)}\biggr)
\leq
\exp\biggl\{-\frac{1}{2}\frac{b_n}{(10+\log(n))^2}\biggr\}=:p_n.
\]
Therefore, using the above bound iteratively, we obtain
%
%
\begin{equation} \label{eqnSproof1}
\mathbb{P}(S_{c_1} = b_1, S_{c_n} \geq b_n
\mbox{ for every }
n \geq2) \geq p_{1,S}\prod_{n=2}^{\infty} (1-p_n).
\end{equation}
Note that $\{S_{c_n} \geq b_n\} \subseteq\Omega_n $ by the choice of
$b_n$, and hence, equation (\ref{eqnSproof1}) implies also
%
%
\begin{equation}
\mathbb{P}\Biggl( \bigcap_{n=1}^{\infty} \Omega_n\Biggr) \geq p_{1,S}
\prod_{n=2}^{\infty} (1-p_n).
\end{equation}
Clearly in this case
%
%
\begin{equation}\label{eqnsumpninfty} p_{1,S}\prod_{n=2}^{\infty
}(1-p_n) > 0 \quad\Leftrightarrow\quad \sum_{n=1}^{\infty} \log
(1-p_n) > -\infty\quad\Leftrightarrow\quad\sum_{n=1}^{\infty} p_n <
\infty.\hspace*{-32pt}
\end{equation}
We conclude (\ref{eqnsumpninfty}) by comparing $p_n$ with $1/n^2$. We
show that there exists $n_0$ such that for $n \geq n_0$ the series
$p_n$ decreases quicker than the series $1/n^2$ and therefore $p_n$ is
summable. We check that
%
%
\begin{equation}\label{eqncomparison}\log\frac{p_{n-1}}{p_{n}} >
\log
\frac{n^2}{(n-1)^2} \qquad\mbox{for } n \geq n_0.
\end{equation}
Indeed
\begin{eqnarray*}
\log\frac{p_{n-1}}{p_{n}} & = & -\frac{1}{2}\biggl(\frac
{b_{n-1}}{(10+\log(n-1))^2} - \frac{b_n}{(10+\log(n))^2}\biggr)\\
& = & \frac{b_{n-1}}{2}\biggl(\frac{11 + \log(n)}{(10+\log(n))^3} -
\frac{1}{(10+\log(n-1))^2} \biggr)\\
& = & \frac{b_{n-1}}{2}\biggl( \frac{(11 + \log(n))(10+\log(n-1))^2 -
(10+\log(n))^3}{(10+\log(n))^3(10+\log(n-1))^2}\biggr).
\end{eqnarray*}
Now recall that $b_{n-1}$ is an increasing sequence. Moreover, the
numerator can be rewritten as
\[
\bigl(10 + \log(n)\bigr)\bigl(\bigl(10+\log(n-1)\bigr)^2 - \bigl(10+\log(n)\bigr)^2\bigr)+ \bigl(10 +
\log(n-1)\bigr)^2;
\]
now use $a^2-b^2 = (a+b)(a-b)$ to identify the leading term $(10 + \log
(n-1))^2$. Consequently there exists a constant $C$ and $n_0 \in
\mathbb
{N}$ s.t. for $n \geq n_0$
\[
\log\frac{p_{n-1}}{p_{n}} \geq\frac{C}{(10+\log(n))^3} >
\frac{2}{n-1} > \log\frac{n^2}{(n-1)^2}.
\]
Hence, $ \sum_{n=1}^{\infty} p_n < \infty$ follows.
\end{pf*}

Now we will describe the coupling construction of $\lancucht$ and
$\lancuchs$. We already remarked that $\bigcap_{n=1}^{\infty} \Omega_n
\subset\Omega_{\infty}$. We will define a coupling that implies also
%
%
\begin{equation}\label{eqngoal1}
\mathbb{P}\Biggl(\Biggl(\bigcap
_{n=1}^{\infty} \Omega_n\Biggr) \cap\Omega_{\tilde{X}\geq S}
\Biggr) \geq
C \mathbb{P}\Biggl(\bigcap_{n=1}^{\infty} \Omega_n\Biggr)
\qquad\mbox{for some universal } C>0\hspace*{-35pt}
\end{equation}
and therefore
%
%
\begin{equation}\label{eqngoal2}\mathbb{P}(\Omega_{\tilde
{X}\geq
S}\cap\Omega_{\infty}) > 0.
\end{equation}
Thus nonergodicity of $\lancuch$ will follow from Lemma \ref
{lemmaSnabove}. We start with the following observation.
%
%
\begin{lemma}\label{lemmacouplingabove} There exists a coupling of
$\tilde{X}_{n} - \tilde{X}_{n-1}$ and $Y_n$, such that:
\begin{longlist}[(a)]
\item[(a)] For every $n \geq1$ and every value of $\tilde
{X}_{n-1}$
%
%
\begin{equation}
\mathbb{P}(\tilde{X}_{n} - \tilde{X}_{n-1} = 1, Y_n = 1) \geq
\mathbb
{P}(\tilde{X}_{n} - \tilde{X}_{n-1} = 1)\mathbb{P}(Y_n = 1).
\end{equation}
\item[(b)] Write even or odd $\tilde{X}_{n-1}$ as $\tilde{X}_{n-1}
= 2i-2$ or $\tilde{X}_{n-1} = 2i-3$, respectively. If $2i - 8 \geq
a_n$, then the following implications hold a.s.
%
%
\begin{eqnarray}\label{eqncouplingabove1}
Y_n = 1 & \quad\Rightarrow\quad&
\tilde{X}_{n} - \tilde{X}_{n-1} =1, \\
\label{eqncouplingabove2}
\tilde{X}_{n} - \tilde{X}_{n-1} =-1 & \quad\Rightarrow\quad& Y_n = -1.
\end{eqnarray}
\end{longlist}
\end{lemma}
\begin{pf} Property (a) is a simple fact for any two $\{
-1,0,1\}$ valued random variables $Z$ and $Z'$ with distributions say,
$\{d_1, d_2, d_3\}$ and $\{d_1', d_2', d_3'\}$. Assign $\mathbb
{P}(Z=Z'=1):= \min\{d_3, d_3'\}$ and (a) follows. To
establish (b) we analyze the dynamics of $\lancuch$ and
consequently, of $\lancucht$. Recall\vspace*{1pt} Algorithm~\ref{algGibbsadap} and
the update rule for $\alpha_n$ in (\ref{eqnformulaforalpha}). Given
$X_{n-1} = (i,j)$, the algorithm will obtain the value of $\alpha_n$ in
step (1); next draw a coordinate according to $(\alpha_{n, 1}, \alpha_{n,
2})$ in step (2). In steps (3) and (4) it will move according to conditional
distributions for updating the first or the second coordinate. These
distributions are
\[
(1/2, 1/2) \quad\mbox{and}\quad \biggl(\frac{i^2}{i^2+(i-1)^2},
\frac{(i-1)^2}{i^2+(i-1)^2}\biggr),
\]
respectively. Hence, given $X_{n-1} = (i,i)$, the distribution of
$X_{n} \in\{(i,\break i-1), (i,i), (i+1,i)\}$ is
%
%
\begin{eqnarray}\label{eqnXdynamics1}
&&\biggl( \biggl(\frac{1}{2}-\frac{4}{a_n}\biggr) \frac
{i^2}{i^2+(i-1)^2},\nonumber\\[-8pt]\\[-8pt]
&&\qquad 1-\biggl(\frac{1}{2}-\frac{4}{a_n}\biggr) \frac{i^2}{i^2+(i-1)^2}
-\biggl( \frac{1}{4} + \frac{2}{a_n} \biggr), \frac{1}{4} + \frac{2}{a_n}
\biggr),\nonumber
\end{eqnarray}
whereas if $X_{n-1} = (i,i-1)$, then $X_{n} \in\{(i-1, i-1), (i,i-1),
(i,i)\}$ with probabilities
%
%
\begin{eqnarray}\label{eqnXdynamics2}
&&\biggl(\frac{1}{4} - \frac{2}{a_n}, 1 - \biggl(\frac{1}{4} -
\frac
{2}{a_n} \biggr) - \biggl(\frac{1}{2}+\frac{4}{a_n}\biggr)\frac
{(i-1)^2}{i^2+(i-1)^2},\nonumber\\[-8pt]\\[-8pt]
&&\qquad\hspace*{106pt} \biggl(\frac{1}{2}+\frac{4}{a_n}
\biggr)\frac
{(i-1)^2}{i^2+(i-1)^2} \biggr),\nonumber
\end{eqnarray}
respectively. We can conclude the evolution of $\lancucht$. Namely, if
$\tilde{X}_{n-1} = 2i-2$, then the distribution of $\tilde{X}_n -
\tilde {X}_{n-1} \in\{-1,0,1\}$ is given\vspace*{1pt} by
(\ref{eqnXdynamics1}) and if $\tilde{X}_{n-1} = 2i-3$, then the
distribution of $\tilde{X}_n - \tilde {X}_{n-1} \in\{-1,0,1\}$ is given
by (\ref{eqnXdynamics2}). Let $\leq _{\mathrm{st}}$ denote stochastic
ordering. By simple algebra both measures defined in
(\ref{eqnXdynamics1}) and (\ref{eqnXdynamics2}) are stochastically
bigger than
%
%
\begin{equation}\label{eqnXdynamicsbound}
\mu_n^i = (\mu_{n,1}^i, \mu_{n,2}^i, \mu_{n,3}^i ),
\end{equation}
where
%
%
\begin{eqnarray}
\label{eqnintermeasureminus1}
\mu_{n,1}^i & = & \biggl(\frac{1}{4} - \frac{2}{a_n}\biggr)
\biggl(1+\frac
{2}{i}\biggr) = \frac{1}{4} - \frac{1}{a_n} - \frac{2i+8 -
a_n}{2ia_n},\hspace*{-35pt}\\
\mu_{n,2}^i & = & 1 -\biggl(\frac{1}{4} - \frac{2}{a_n}\biggr)
\biggl(1+\frac
{2}{i}\biggr) - \biggl(\frac{1}{4}+\frac{2}{a_n}\biggr)\biggl(1-\frac
{2}{\max\{
4,i\}}\biggr),\nonumber\hspace*{-35pt}\\
\label{eqnintermeasureplus1}
\mu_{n,3}^i &=& \biggl(\frac{1}{4}+\frac
{2}{a_n}\biggr)\biggl(1-\frac{2}{\max\{4,i\}}\biggr) = \frac
{1}{4} +
\frac{1}{a_n} + \frac{2\max\{4,i\}-8 - a_n}{2a_n \max\{4,i\}
}.\hspace*{-35pt}
\end{eqnarray}
Recall $\nu_n$, the distribution of $Y_n$ defined in (\ref
{eqndistrofY}). Examine (\ref{eqnintermeasureminus1}) and (\ref
{eqnintermeasureplus1}) to see that if $2i - 8 \geq a_n$, then $\mu_n^i
\geq_{\mathrm{st}} \nu_n$. Hence, in this case also, the distribution
of $\tilde{X}_{n} - \tilde{X}_{n-1}$ is stochastically bigger than the
distribution of $Y_n$. The joint probability distribution of $(\tilde
{X}_{n} - \tilde{X}_{n-1}, Y_n)$ satisfying (\ref{eqncouplingabove1})
and (\ref{eqncouplingabove2}) follows.
\end{pf}
\begin{pf*}{Proof of Proposition~\ref{factXnonergodic}}
Define
%
%
\begin{equation}
\Omega_{1, \tilde{X}}:= \{\omega\in\Omega\dvtx\tilde{X}_n -
\tilde
{X}_{n-1} =1 \mbox{ for every } 0 < n \leq c_1\}.
\end{equation}
Since the distribution of $\tilde{X}_n - \tilde{X}_{n-1}$ is
stochastically bigger than $\mu_n^i$ defined in (\ref
{eqnXdynamicsbound}) and $\mu_n^i(1) > c > 0$ for every $i$ and $n$,
\[
\mathbb{P}(\Omega_{1, \tilde{X}}) =: p_{1,\tilde{X}} > 0.
\]
By Lemma~\ref{lemmacouplingabove} (a) we have
%
%
\begin{equation}
\mathbb{P}(\Omega_{1, \tilde{X}}\cap\Omega_1 ) \geq
p_{1,S}
p_{1,\tilde{X}} > 0.
\end{equation}
Since $S_{c_1}=\tilde{X}_{c_1} = c_1 = b_1$, on $\Omega_{1, \tilde
{X}}\cap\Omega_1$, the requirements for Lemma \ref
{lemmacouplingabove}(b) hold for $n-1=c_1$. We shall use Lemma
\ref{lemmacouplingabove}(b) iteratively\vspace*{1pt} to keep
$\tilde{X}_n \geq S_n$ for every~$n$. Recall that we write $\tilde
{X}_{n-1}$ as $\tilde{X}_{n-1} = 2i-2$ or $\tilde{X}_{n-1} = 2i-3$. If
$2i - 8 \geq a_n$ and $\tilde{X}_{n-1} \geq S_{n-1}$, then by Lemma
\ref {lemmacouplingabove}(b) also $\tilde{X}_{n} \geq S_{n}$. Clearly
if $\tilde{X}_{k} \geq S_{k}$ and $S_k \geq\frac{b_{n-1}}{2}$ for
$c_{n-1} < k \leq c_n$ then $\tilde{X}_k \geq\frac{b_{n-1}}{2}$ for
$c_{n-1} < k \leq c_n$, hence,
\[
2i-2 \geq\frac{b_{n-1}}{2} \qquad\mbox{for } c_{n-1} < k \leq c_n.
\]
This in turn gives $2i - 8 \geq\frac{b_{n-1}}{2} - 6$ for $c_{n-1} < k
\leq c_n$ and since $a_k = 10 + \log(n)$, for the iterative
construction to hold, we need $b_{n} \geq32+2\log(n+1)$. By the
definition of $b_n$ and standard algebra we have
\[
b_n \geq1000\Biggl(1 + \sum_{i=2}^n \frac{1}{10+\log(n)}\Biggr)
\geq
32+2\log(n+1) \qquad\mbox{for every } n\geq1.
\]
Summarizing the above argument provides
\begin{eqnarray*}
\mathbb{P}(X_{n, 1} \to\infty) & \geq& \mathbb{P}(\Omega
_{\infty
} \cap\Omega_{\tilde{X}\geq S} ) \geq\mathbb
{P}
\Biggl(\Biggl(\bigcap_{n=1}^{\infty} \Omega_n\Biggr) \cap\Omega_{\tilde
{X}\geq
S}\Biggr) \\ &\geq& \mathbb{P}\Biggl(\Omega_{1, \tilde{X}} \cap
\Biggl(\bigcap_{n=1}^{\infty} \Omega_n\Biggr) \cap\Omega_{\tilde{X}\geq
S}\Biggr) \\ & \geq& p_{1,\tilde{X}}p_{1,S} \prod_{n=2}^{\infty}
(1-p_n) > 0.
\end{eqnarray*}
Hence, $\lancuch$ is not ergodic, and in particular, $\|\pi_n - \pi\|
_{\mathrm{TV}} \nrightarrow0$.
\end{pf*}


%

\printaddresses

\end{document}